\pgfplotsset{compat=1.11}
\def\NN{\mathbb N} 
\def\ZZ{\mathbb Z} 
\def\RR{\mathbb R} 
\def\l{\left}
\def\r{\right}
\newcommand{\bb}[1]{\begin{equation}\label{#1}}
\newcommand{\ee}{\end{equation}}
\newcommand{\bbb}{\begin{eqnarray}}
\newcommand{\eee}{\end{eqnarray}}
\newcommand{\bbbb}{\begin{eqnarray*}}
\newcommand{\eeee}{\end{eqnarray*}}
\newcommand{\nnn}{\nonumber}
\newcommand{\no}{\noindent}
\def\R#1{$(\ref{#1})$}
\newtheorem{theorem}{Theorem}
\theoremstyle{remark}
\theoremstyle{definition}
\def\thmhead@plain#1#2#3{%
  \thmname{#1}\thmnumber{\@ifnotempty{#1}{ }\@upn{#2}}%
  \thmnote{ {\the\thm@notefont#3}}}
\let\thmhead\thmhead@plain
\journal{}
\begin{document}

\begin{frontmatter}



\title{Anomalous Diffusion in One-Dimensional Disordered Systems: A Discrete Fractional Laplacian Method}


\author[ttu,ttu1]{J. L. Padgett}
\ead{joshua.padgett@ttu.edu}
\fntext[ttu]{Corresponding Author}
\address[ttu1]{Department of Mathematics and Statistics, Texas Tech University, Broadway and Boston, Lubbock, TX, 79409, USA}

\author[bu]{E. G. Kostadinova}

\author[bu,d]{C. D. Liaw}

\author[rice]{K. Busse}

\author[bu]{L. S. Matthews}

\author[bu]{T. W. Hyde}

\address[bu]{Center for Astrophysics, Space Physics, and Engineering Research and Department of Physics, One Bear Place, 97316, Baylor University, Waco, TX, 76798, USA}

\address[rice]{Department of Mathematics, Rice University, 6100 Main St., Houston, TX 77005, USA}

\address[d]{Department of Mathematical Sciences, University of Delaware, 311 Ewing Hall, Newark, DE, 19716, USA}

\begin{abstract}
This work extends the applications of Anderson-type Hamiltonians to include transport characterized by anomalous diffusion. Herein, we investigate the transport properties of a one-dimensional disordered system that employs the discrete fractional Laplacian, $(-\Delta)^s,\ s\in(0,2),$ in combination with results from spectral and measure theory. It is a classical mathematical result that the standard Anderson model exhibits localization of energy states for all nonzero disorder in one-dimensional systems. Numerical simulations utilizing our proposed model demonstrate that this localization effect is enhanced for sub-diffusive realizations of the operator, $s\in (1,2),$ while the super-diffusive realizations of the operator, $s\in (0,1),$ can exhibit energy states with less localized features. These results suggest that the proposed method can be used to examine anomalous diffusion in physical systems where strong correlations, structural defects, and nonlocal effects are present. 
\end{abstract}

\begin{keyword} 
Anderson localization \sep anomalous diffusion \sep discrete fractional Laplacian \sep spectral approach \sep disordered systems
\end{keyword}

\end{frontmatter}



\section{Introduction} 

The concept of localization was first studied by P. W. Anderson in 1958, when he suggested that sufficiently large impurities in a disordered medium could lead to the localization of electrons. This phenomenon is known as {\em Anderson localization} and has motivated various mathematical and physical studies in the past 60 years. The classical localization problem in one-dimension is well understood in systems with nearest-neighbor interactions, yet there are a multitude of open questions related to transport driven by correlated and nonlocal effects. Herein, we propose a non-local model for studying one-dimensional anomalous transport which provides new perspectives on numerous physical and mathematical problems. This new model allows for the consideration of systems which exhibit exotic transport properties and interactions. In particular, we are interested in a generalized notion of diffusion known as {\em anomalous diffusion}.

Diffusion is a persistent random walk characteristic of diverse systems, such as neutrons in nuclear reactors \cite{osti_4074688}, stock market prices \cite{bachelier2011louis}, and pollen particles suspended in fluids \cite{nelson1967dynamical}. In the standard diffusion regime, the mean squared displacement of an ensemble of moving particles increases linearly in time, {\em i.e.} $\langle (x-x_0)^2\rangle \sim t^\beta,$ where $\beta=1.$ However, nonlinear mean squared displacement, characterized by exponents $\beta\neq 1,$ is also possible, yielding two distinct examples of {\em anomalous transport}: {\em subdiffusion}, when $\beta\in(0,1),$ and {\em superdiffusion}, when $\beta>1.$ Anomalous diffusion has been analyzed theoretically and observed experimentally in various physical systems, including amorphous semiconductors, glasses, porous media, granular matter, and plasmas \cite{bouchaud1990anomalous,10.1007/BFb0106838,binder1999anomalous,UPADHYAYA2001549,palombo2013structural,drummond1962anomalous,tarasov2006fractional}. Such diffusive behavior has also been related to complex processes such as turbulence, biological cell motility, and superconductivity \cite{shlesinger1993strange,0305-4470-37-31-R01,hansen1999dispersion}. In the presence of disorder, many of the aforementioned systems are known to exhibit localization properties, although the definitions of localization in each study strongly depend on the underlying physical system. The current study aims to investigate the effect of anomalous diffusion on the transport properties of one-dimensional disordered media. This research is the first of its kind in that it combines the spectral approach developed in \cite{Liaw2013} and the known discrete fractional Laplacian results from \cite{ciaurri2015fractional}. The proposed method employs the spectral definition of localization, which will be further discussed in Section 3.

The time evolution of a physical system is governed by the repeated application of a Hamiltonian operator to a given initial state. A typical Hamiltonian for modeling transport via classical diffusion in a one-dimensional disordered system is given by the discrete random Schrödinger operator of the form
\bb{std1}
H_\epsilon \mathrel{\mathop:}= -\Delta + \sum_{i\in\ZZ} \epsilon_i \,\langle \cdot,\delta_i\rangle\, \delta_i,
\ee
where $\Delta$ is the discrete Laplacian in one-dimension, with $\langle \cdot,\cdot\rangle$ being the $\ell^2(\ZZ)$ inner product, $\delta_i$ are the standard Kronecker delta functions defined on $\ZZ,$ and $\epsilon_i$ are random variables taken to be independently and identically distributed (i.i.d.)~according to the uniform distribution on $[-c/2,c/2],$ with $c>0.$ The operator given in \R{std1} does not allow one to model systems exhibiting anomalous diffusion, hence, we extend this method by considering the {\em random discrete fractional Schr{\"o}dinger operator}, formally given by
\bb{schro1}
H_{s,\epsilon} \mathrel{\mathop:}= (-\Delta)^s + \sum_{i\in\ZZ} \epsilon_i \,\langle \cdot, \delta_i\rangle\, \delta_i,
\ee
for some $s\in (0,2).$

The operator $(-\Delta)^s$ is the {\em discrete fractional Laplacian} and will be defined in the following section. The operator $(-\Delta)^s$ can be used to describe the {\em non-local} motion of an electron in a one-dimensional chain with atoms located at all integer lattice points in $\ZZ.$ When $s=1,$ the operator in \R{schro1} reduces to the classical random discrete Schr{\"o}dinger operator, given in \R{std1}, studied in \cite{jakvsic2000spectral}. However, when $s\neq 1,$ this operator considers the possibility of electrons jumping to non-neighboring lattice points, which corresponds to an {\em anomalous-type diffusion} process. The perturbation $\textstyle\sum_{i\in\ZZ} \epsilon_i \,\langle \cdot, \delta_i\rangle\, \delta_i$ can be used to model random displacements of the atoms located at the lattice points. This perturbation is almost surely a non-compact operator, which means that classical perturbation theory cannot be applied (for more details see \cite{birman2012spectral,kato2013perturbation}). Here, the parameter $c$ is interpreted as the strength of the disorder at the lattice points. Finally, it is worth emphasizing that the nonlinear mean squared displacement exponent $\beta$, while related, differs from the fractional power of the discrete Laplace operator. For the particular operator considered in this study, $\beta$ is asymptotically proportional to $s^{-1}.$ More details regarding this relationship will be provided in Section 2.3.

In the classical setting, $s=1,$ it has been shown via the spectral method (see Section 3 for details) that localization is expected for all disorders $c>0$ \cite{Liaw2013}. The spectral method was introduced in \cite{Liaw2013}, where it was also used to numerically confirm the existence of extended states for the two-dimensional discrete random Schr\"odinger operator for weak disorder. Applications to other underlying geometries, such as the square, hexagonal, triangular lattice in two-dimensions, and the three-dimensional square lattice, were explored in \cite{1751-8121-47-30-305202,2053-1591-3-12-125904,PhysRevB.96.235408,kostadinova2018transport,PhysRevB.99.024115}. The unperturbed operator used in these papers was the classical Laplacian ($s=1$). The spectral approach allows for the development of efficient computational techniques that can provide direction and intuition for analytic results in both mathematics and physics. Moreover, the physical interpretation of this approach has been recently established \cite{2053-1591-3-12-125904}.
Other theoretical expectations and results for the continuous counterpart of \R{schro1}, in the case where $s\in(0,1),$ have been established in \cite{laskin2002fractional}.

Herein, we extend the application of the spectral approach to the study of transport guided by nonlocal interactions. Recent numerical studies have suggested that delocalized transport behavior can potentially occur in media with correlated disorders \cite{lazo2010conducting,shima2005breakdown,shima2006metal}. In contrast, the present work employs the operator given by \R{schro1} to model the nonlocal interaction, while assuming random uncorrelated on-site energies. Using this model, we numerically investigate localization properties of the operator given by \R{schro1}. By considering various fractional powers of the Laplacian, we demonstrate enhanced localized behavior for $s\in(1,2)$ and enhanced transport for $s\in(0,1).$ These observations have interesting implications both mathematically and physically, thus yielding exciting new avenues of research. The current study will undoubtedly motivate the consideration of localization in the presence of anomalous diffusion with respect to the more restrictive definitions, such as {\em dynamic localization}.

This article is organized as follows. Section 2 provides relevant theoretical background regarding the discrete fractional Laplacian and its associated nonlocal weights. Section 3 details the spectral approach employed for studying the transport behavior of the newly proposed model. Section 4 outlines in detail the numerical method used in our simulations. We provide computational results that validate the proposed method and elucidate numerous analytical properties of the random discrete fractional Schr{\"o}dinger operator. Finally, concluding remarks and a discussion of possible future projects are provided in Section 5. 



\section{Theoretical Background}

In the following sub-sections we present the necessary theoretical background for the discrete fractional Laplacian. Section 2.1 introduces the discrete fractional Laplacian and outlines some necessary results for the construction of the numerical method. Novel results regarding higher fractional powers of the Laplacian are provided. In Section 2.2, we briefly provide a description of physical interpretations of the discrete fractional Laplacian relevant to the current work and Section 2.3 provides a brief motivation of the relationship between the fractional power of the Laplace operator and its associated nonlinear mean squared displacement.

\subsection{Discrete Fractional Laplacian}

The fractional Laplacian has been studied in mathematics for nearly a century. Understood as the classical Laplacian raised to positive powers, this operator has received attention in potential theory, fractional calculus, harmonic analysis, and probability theory \cite{10.2307/1993412,bogdan1999potential,baleanu2012fractional,valdinoci2009long}. However, only recently has the operator garnered attention in the fields of differential equations and physics. Due to this newfound interest, the fractional Laplacian has become one of the most researched mathematical objects of the past decade. Classically, only fractional powers $s\in (0,1)$ have been considered, and in this case, one can define the fractional Laplacian on $\RR^d$ as the hyper-singular integral given by
\bb{def1}
(-\Delta)^s u(x)\mathrel{\mathop:}= c_{d,s}\,\lim_{\varepsilon\to 0^+}\int_{\RR^d\backslash B_\varepsilon(x)}\frac{u(x)-u(\xi)}{|x-\xi|^{d+2s}}\,d\xi,
\ee
where $x\in\RR^d,$ $B_\varepsilon(x)$ is the $d-$dimensional ball of radius $\varepsilon>0$ centered at $x\in\RR^d,$ and $c_{d,s}$ is some normalization constant. The operator can also be defined as a pseudo-differential operator via its Fourier transform, {\em i.e.},
\bb{def2}
\widehat{(-\Delta)^s}u(\xi) = |\xi|^{2s}\hat{u}(\xi).
\ee

The recent interest in the fractional Laplacian has been a direct consequence of the revolutionary work by Caffarelli and Silvestre, who demonstrated in \cite{doi:10.1080/03605300600987306} that one may study $(-\Delta)^s,\ s\in(0,1),$ via the Dirichlet-to-Neumann operator associated with a particular extension problem. The Dirichlet-to-Neumann operator is a particular example of the Poincar{\'e}-Steklov operator and maps the values of a harmonic function on the boundary of some domain to the normal derivative values
of the same function on the same boundary. Caffarelli and Silvestre's approach provided an extension of a well-known result regarding the square root of the Laplacian as it arose in fluid dynamics and finance (see \cite{cabre2010positive} and the references therein). That is, for $s\in (0,1),$ they showed that
\bb{dlap1}
(-\Delta)^su(x) = c_s \lim_{t\to 0^+} t^{1-2s}v_t(x,t),
\ee
where $v(x,t)\,:\,\RR^d\times \RR \to \RR_+$ is the solution to the following Bessel-type problem
\bb{bprob}
\l\{\begin{array}{rcll}
v_{tt}(x,t) + \frac{1-2s}{t}v_t(x,t) + \Delta v(x,t) & = & 0 , & x\in\RR^d,\ t > 0,\\
v(x,0) & = & u(x), & x\in \RR^d,
\end{array}\r.
\ee
and $c_s \mathrel{\mathop:}= 2^{2s-1}\Gamma(s)/\Gamma(1-s).$ Thus, one may study the highly nonlocal fractional Laplacian operator by considering the local problem \R{bprob}. While \R{bprob} is posed in one higher dimension and exhibits either a singular or degenerate nature depending on the value of $s,$ it is amenable to classical analytical and numerical techniques. Recent work by Chen, Lei, and Wei has demonstrated that similar extension problems may be derived for higher fractional powers of the Laplacian, though these extensions have not yet been employed in approximating solutions to higher-order problems \cite{Chen2018}.

In this study, we investigate the fractional powers of the discrete Laplacian for exponents $s\in(0,2).$ While numerous definitions of such an operator exist (for instance, see \cite{kwasnicki2017ten}), we begin by introducing existing results for $s\in(0,1)$ (see \cite{ciaurri2016nonlocal,ciaurri2015fractional}) and then expand them to the case where $s\in(0,2).$ Let $u\,:\,\ZZ\to\RR$ with $u_n\mathrel{\mathop:}= u(n),\ n\in\ZZ.$ We then define the discrete Laplacian on $\ZZ$ as
\bb{dlap}
\Delta u_n \mathrel{\mathop:}= u_{n+1}-2u_n+u_{n-1}.
\ee
For the discrete Laplacian, an extension problem similar to \R{bprob} can be generated  and solved uniquely, with its bounded solution given by
\bb{extsol}
v(x,t) = \frac{1}{\Gamma(s)}\int_0^\infty z^{s-1}e^{-t^2/4z}e^{-z\Delta}(-\Delta)^su(x)\,dz,
\ee
where $e^{-z\Delta}$ is the standard semigroup generated by the discrete Laplacian on $\ZZ$ \cite{ciaurri2016nonlocal,meichsner2017fractional,doi:10.1080/03605301003735680,Gale2013}. Explicit calculation then yields, via \R{dlap1}, the following representation
\bb{dlap2}
(-\Delta)^s u(x) = \frac{1}{\Gamma(-s)}\int_0^\infty z^{-s-1}\l(e^{-z\Delta}-I\r)u(x)\,dz,
\ee
where $I$ is the identity operator. The expression \R{dlap2} is often used as the definition of fractional powers of the Laplacian \cite{ciaurri2016nonlocal,ciaurri2015fractional}.
It is also important to note that the formulations outlined above result in operators which are {\em not} discretizations of the continuous fractional Laplace operator (for instance, see \cite{zoia2007fractional,huang2014numerical,duo2018novel}).
From \R{dlap2} and standard results regarding the discrete Laplacian semigroup (see \cite{Ciaurri2017} and references therein), the following theorem has been developed \cite{ciaurri2015fractional}. 

\begin{theorem}[\cite{ciaurri2015fractional}]
For $s\in (0,1),$ we define
$$\ell_{s}\mathrel{\mathop:}= \l\{u\,:\,\ZZ\to\RR\,:\,\|u\|_{\ell_{s}}\mathrel{\mathop:}= \sum_{n\in\ZZ}\frac{|u_n|}{(1+|n|)^{1+ 2s}}<\infty\r\}.$$
\begin{itemize}
\item[i.] For $u\in \ell_{s}$ we have 
\bb{solform}
(-\Delta)^su_n = \sum_{m\in\ZZ;\,m\neq n} \l(u_n - u_m\r)K_s(n-m),
\ee
where the discrete kernel is given by
\bb{kern}
K_s(m)\mathrel{\mathop:}= \l\{\begin{array}{ll}
\frac{4^s\Gamma(1/2+s)}{\sqrt{\pi}|\Gamma(-s)|}\cdot\frac{\Gamma(|m|-s)}{\Gamma(|m|+1+s)}, & m\in\ZZ\backslash\{0\},\\
0, & m=0.
\end{array}\r.
\ee

\item[ii.] For $s\in (0,1)$ there exists constants $0 < c_s \le C_s$ such that, for any $m\in\ZZ\backslash\{0\},$ 
\bb{ineq1}
\frac{c_s}{|m|^{1+2s}} \le K_s(m) \le \frac{C_s}{|m|^{1+2s}}.
\ee

\item[iii.] If $u\in \ell_0,$ then $\lim_{s\to 0^+}(-\Delta)^s u_n = u_n.$

\item[iv.] If $u$ is bounded, then $\lim_{s\to 1^-} (-\Delta)^s u_n = -\Delta u_n.$
\end{itemize}
\end{theorem}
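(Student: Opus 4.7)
The plan is to derive the kernel representation in part (i) from the semigroup formula \R{dlap2}, and then use it to establish (ii), (iii), and (iv). For (i), I would begin with the explicit form of the discrete heat semigroup on $\ZZ$, namely $(e^{-z\Delta}u)_n = e^{-2z}\sum_{m\in\ZZ} I_{|n-m|}(2z)\,u_m$, where $I_k$ is the modified Bessel function of the first kind. Using the conservation identity $\sum_m e^{-2z}I_{|m|}(2z)=1$, the increment rewrites as $(e^{-z\Delta}u - u)_n = \sum_{m\neq n} e^{-2z}I_{|n-m|}(2z)(u_m - u_n)$. Substituting into \R{dlap2} and swapping sum and integral by Fubini (justified by $u\in\ell_{s}$ combined with the asymptotics $I_k(2z)\sim z^k/k!$ near zero and $e^{-2z}I_k(2z)\sim 1/\sqrt{4\pi z}$ at infinity) reduces everything to evaluating the inner integral $-\Gamma(-s)^{-1}\int_0^\infty z^{-s-1}e^{-2z}I_{k}(2z)\,dz$. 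This admits a classical closed form via the Mellin transform of the Bessel function which, combined with the Legendre duplication formula for $\Gamma$, reproduces the expression for $K_s$ in \R{kern}.

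Given the explicit formula for $K_s$, part (ii) follows from Stirling's expansion $\Gamma(k-s)/\Gamma(k+1+s) = k^{-1-2s}(1+O(1/k))$ as $k\to\infty$; for each bounded $k\ge 1$ the ratio is a finite positive number, so the constants $c_s,C_s$ can be chosen to cover all $m\neq 0$. Part (iv) likewise reduces to the kernel: a direct computation using $1/|\Gamma(-s)| = s/\Gamma(1-s)$ shows $K_s(\pm 1)\to 1$ while $K_s(m)\to 0$ for $|m|\ge 2$ as $s\to 1^-$, and the upper bound in \R{ineq1} supplies a summable majorant $C|m|^{-1-2s_0}$ uniform for $s$ in a left neighborhood of $1$. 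Dominated convergence then passes the limit inside \R{solform} and yields $(u_n-u_{n+1})+(u_n-u_{n-1}) = -\Delta u_n$.

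Part (iii) requires a different tactic, because the prefactor $1/|\Gamma(-s)|$ vanishes with $s$ and each $K_s(m)$ tends to zero termwise, so the kernel representation alone supplies no candidate limit. Here I would return to \R{dlap2} and integrate by parts in $z$ to obtain
\bbbb
(-\Delta)^s u_n = -\frac{1}{\Gamma(1-s)}\int_0^\infty z^{-s}\,\frac{d}{dz}(e^{-z\Delta}u)_n\,dz,
\eeee
with the boundary contributions vanishing because $(e^{-z\Delta}u)_n - u_n = O(z)$ as $z\to 0^+$ (so $z^{-s}\cdot O(z)\to 0$ for $s<1$) and because $(e^{-z\Delta}u)_n\to 0$ as $z\to\infty$ by the standard heat kernel decay on $\ZZ$ applied to $u\in\ell_{0}$. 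Passing $s\to 0^+$ inside via dominated convergence, the remaining integral telescopes to $-u_n$, whence $(-\Delta)^s u_n\to u_n$. The principal obstacle throughout is part (i), specifically the closed-form evaluation of the Bessel integral and the careful Fubini justification under the relatively weak summability hypothesis $u\in\ell_{s}$; once (i) is secured, the remaining parts reduce to standard asymptotic and dominated-convergence arguments.
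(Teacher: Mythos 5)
The paper does not actually prove this theorem: it is imported verbatim from \cite{ciaurri2015fractional}, with the surrounding text only indicating that it follows from the semigroup formula \R{dlap2} and standard facts about the discrete heat semigroup. Your proposal is, in essence, a correct reconstruction of that reference's argument: representing the semigroup through the modified Bessel kernel $e^{-2z}I_{|n-m|}(2z)$, using conservation of mass to write the increment as differences against $u_n$, interchanging sum and integral, and evaluating the resulting Mellin-type Bessel integral in closed form (this is exactly where the constant $4^s\Gamma(1/2+s)/\sqrt{\pi}$ and the ratio $\Gamma(|m|-s)/\Gamma(|m|+1+s)$ come from, via the duplication formula). Parts (ii) and (iv) as you describe them also match the standard treatment; indeed the paper's own proof of the companion statement for $s\in(1,2)$ (Theorem 2, parts ii--iv) uses precisely the Gamma-ratio asymptotics and the identity $\sum_{m\neq 0}K_s(m)=4^s\Gamma(1/2+s)/(\sqrt{\pi}\,\Gamma(1+s))$ that you invoke.

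The one place I would push back is part (iii). Your claim that the kernel representation ``supplies no candidate limit'' is overstated: writing $(-\Delta)^su_n=A_su_n-\sum_{m\neq 0}K_s(m)u_{n-m}$ with $A_s=\sum_{m\neq 0}K_s(m)=4^s\Gamma(1/2+s)/(\sqrt{\pi}\,\Gamma(1+s))\to 1$ isolates the limit $u_n$ immediately, and the remaining sum is killed using the bound $K_s(m)\le Cs/|m|$ (with $C$ uniform for $s$ near $0$, coming from $1/|\Gamma(-s)|=s/\Gamma(1-s)$) together with $u\in\ell_0$; this is the route taken in the cited reference and it is shorter than yours. Your integration-by-parts alternative is legitimate but leaves two nontrivial verifications implicit: that $(e^{-z\Delta}u)_n\to 0$ as $z\to\infty$ for $u\in\ell_0$ (false for general bounded $u$, so the $\ell_0$ hypothesis must genuinely enter here, e.g.\ via $\sum_{|k|\le N}|u_k|=o(N)$), and that $z^{-s}\,\partial_z(e^{-z\Delta}u)_n$ admits an $s$-uniform integrable majorant near $z=\infty$, which again requires decay of $u$ beyond boundedness. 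Neither point is fatal, but both need to be written out for that version of the argument to close.
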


From Theorem 1, we are able to discern many useful properties which are important for both the theoretical analysis and computational procedures employed in problems involving the fractional Laplacian. In the ensuing experiments, our calculations will be based on \R{solform} and its various approximations, the details of which are contained in Section 4. 

It now remains to discuss the case when $s\in (1,2).$ In \cite{ros2014pohozaev} it was shown that for $s>1,$ we have
\bb{s>1}
(-\Delta)^su_n = (-\Delta)^{s-1}(-\Delta)u_n.
\ee
Using \R{s>1}, we have the following theorem for $s\in (1,2).$ The proof for Theorem 2, parts {\em ii.}-{\em iv.}, are similar to those in \cite{ciaurri2016nonlocal} but we include them for completeness.

\begin{theorem}
For $s\in (1,2),$ we define
$$\ell_{s}\mathrel{\mathop:}= \l\{u\,:\,\ZZ\to\RR\,:\,\|u\|_{\ell_{s}}\mathrel{\mathop:}= \sum_{n\in\ZZ}\frac{|u_n|}{(1+|n|)^{1+ 2s}}<\infty\r\}.$$
\begin{itemize}
\item[i.] For $u\in \ell_{s}$ we have $(-\Delta)^su_n$ is given by \R{solform}, where the discrete kernel is also given by \R{kern}. 

\item[ii.] For $s\in (1,2)$ there exists constants $0 < c_s \le C_s$ such that, for any $m\in\ZZ\backslash\{0\},$ the discrete kernel $K_s$ satisfies \R{ineq1}.

\item[iii.] If $u$ is bounded, then $\lim_{s\to 1^+}(-\Delta)^s u_n = -\Delta u_n.$

\item[iv.] If $u$ is bounded, then $\lim_{s\to 2^-} (-\Delta)^s u_n = (-\Delta)^2 u_n,$ where $(-\Delta)^2$ is the classical {\em bi-harmonic operator}.
\end{itemize}
\end{theorem}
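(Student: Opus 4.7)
The plan is to exploit the composition identity (\ref{s>1}), which expresses $(-\Delta)^s$ for $s\in(1,2)$ as the composition $(-\Delta)^{s-1}(-\Delta)$ with $s-1\in(0,1)$. This reduces every claim about $s\in(1,2)$ to a statement about $(-\Delta)^{s-1}$ applied to the auxiliary sequence $v_n := -\Delta u_n = 2u_n - u_{n+1} - u_{n-1}$, and then invokes Theorem 1.

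For part \emph{i}, I would first verify that $u\in\ell_s$ implies $v=-\Delta u\in\ell_{s-1}$ (a short check using the triangle inequality together with the elementary bound $(1+|n\pm 1|)^{-(1+2(s-1))}\le C(1+|n|)^{-(1+2(s-1))}$ and the fact that the weight in $\ell_{s-1}$ is heavier than that in $\ell_s$). Applying Theorem 1.\emph{i} to $v$ gives
\[
(-\Delta)^{s-1}v_n \;=\; \sum_{k\ne n}(v_n-v_k)\,K_{s-1}(n-k).
\]
I would then substitute the definition of $v$, interchange the finite difference with the infinite sum (justified by absolute convergence coming from $u\in\ell_s$ combined with the kernel bound (\ref{ineq1}) applied to $K_{s-1}$), and reindex so that the right-hand side takes the form $\sum_{m\ne n}(u_n-u_m)\widetilde K_s(n-m)$. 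A direct computation shows $\widetilde K_s(j)=2K_{s-1}(j)-K_{s-1}(j-1)-K_{s-1}(j+1)$ for $|j|\ge 2$, with an additional contribution from the global constant $\sum_j K_{s-1}(j)$ at $|j|=1$. The remaining task, and the main obstacle, is to verify the algebraic identity $\widetilde K_s(j)=K_s(j)$. This I would derive from the Gamma function recurrence $\Gamma(z+1)=z\Gamma(z)$, rewriting each of $\Gamma(|m|\pm 1-(s-1))/\Gamma(|m|\pm 1+s)$ in terms of $\Gamma(|m|-s)/\Gamma(|m|+1+s)$ and then collecting.

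Once (\ref{solform}) and (\ref{kern}) are established, parts \emph{ii}--\emph{iv} follow quickly. For part \emph{ii}, Stirling's asymptotic $\Gamma(|m|-s)/\Gamma(|m|+1+s)\sim|m|^{-1-2s}$ gives both the upper and lower bound for large $|m|$, and the finitely many remaining values are handled by direct inspection (with some care about signs of the Gamma factors, since $\Gamma(|m|-s)$ can change sign for small $|m|$). For part \emph{iii}, as $s\to 1^+$ we have $s-1\to 0^+$, and Theorem 1.\emph{iii} applied to the sequence $v_n = -\Delta u_n$ (which inherits the required summability from $u$ bounded) gives $(-\Delta)^{s-1}v_n \to v_n = -\Delta u_n$; combined with (\ref{s>1}), this is exactly the claim. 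For part \emph{iv}, as $s\to 2^-$ we have $s-1\to 1^-$, and Theorem 1.\emph{iv} applied to the bounded sequence $v_n = -\Delta u_n$ yields $(-\Delta)^{s-1}v_n \to -\Delta v_n = (-\Delta)^2 u_n$.

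The main obstacle is unquestionably the algebraic verification in part \emph{i}: collapsing the triply-indexed object $\sum_k(v_n-v_k)K_{s-1}(n-k)$ into the single-kernel form (\ref{solform}) requires both a careful reindexing and a nontrivial Gamma function identity. Justifying the interchange of summations (which requires the $\ell_s$ decay hypothesis to be propagated faithfully through $-\Delta$ and paired with the decay of $K_{s-1}$) is more routine, but still needs to be executed with some care so that no constant terms from $\sum_j K_{s-1}(j)$ are lost in the passage from the three-kernel expression to the final single-kernel form.
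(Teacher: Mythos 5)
Your architecture for part \emph{i} --- factor via \R{s>1}, apply Theorem 1.\emph{i} to $v=-\Delta u$, reindex, and verify that the coefficient of $u_n$ matches and that $2K_{s-1}(m)-K_{s-1}(m-1)-K_{s-1}(m+1)=K_s(m)$ --- is exactly the paper's proof; the paper carries out the Gamma-function identity using the rewritten kernel $K_s(m)=(-1)^{m+1}\Gamma(2s+1)/[\Gamma(1+s+m)\Gamma(1+s-m)]$ (duplication plus reflection) rather than the recurrence $\Gamma(z+1)=z\Gamma(z)$ on the original form, a cosmetic difference. For part \emph{ii} the paper simply cites Lemma 9.2 of \cite{ciaurri2016nonlocal}, where you propose Stirling plus inspection; note that the sign issue you flag is real and not resolvable by inspection: for $s\in(1,2)$ one has $\Gamma(1-s)<0$, so $K_s(\pm 1)<0$ (these are the negative nearest-neighbor weights of Figure 1), and the positive lower bound in \R{ineq1} cannot hold literally at $m=\pm 1$; it must be read for $|m|\ge 2$ or with $|K_s(m)|$. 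Your part \emph{iv} is correct and is actually cleaner than the paper, which proves \emph{iv} by a direct kernel computation rather than by invoking Theorem 1.\emph{iv} for $v$; since Theorem 1.\emph{iv} requires only boundedness, which $v=-\Delta u$ inherits from $u$, your reduction is legitimate there.

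Two steps do fail. First, in part \emph{i} you claim $u\in\ell_s\Rightarrow -\Delta u\in\ell_{s-1}$ because ``the weight in $\ell_{s-1}$ is heavier than that in $\ell_s$''; the inclusion goes the other way. The $\ell_{s-1}$ weight $(1+|n|)^{-(2s-1)}$ is \emph{larger} than the $\ell_s$ weight $(1+|n|)^{-(1+2s)}$, so $\ell_{s-1}\subsetneq\ell_s$, and a second difference does not in general gain decay: for $u_n=(-1)^n(1+|n|)^{2s-1+\delta}$ with $0<\delta<1$ one has $u\in\ell_s$ but $-\Delta u_n\approx 4u_n\notin\ell_{s-1}$. (The paper sidesteps this by treating the intermediate series formally and never addressing its convergence; a careful version needs a direct absolute-convergence argument for the reindexed double sum under $u\in\ell_s$.) Second, and more seriously, part \emph{iii} cannot be obtained by citing Theorem 1.\emph{iii}: its hypothesis is $v\in\ell_0$, i.e.\ $\sum_n|v_n|/(1+|n|)<\infty$, a decay condition that boundedness of $u$ does not supply ($u_n=(-1)^n$ gives $v_n=4(-1)^n\notin\ell_0$), and the conclusion of 1.\emph{iii} is genuinely false for bounded non-decaying sequences (for $w\equiv 1$, $(-\Delta)^\sigma w\equiv 0\not\to 1$ as $\sigma\to 0^+$, since the kernel mass escapes to infinity). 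An additional argument exploiting the structure $v=-\Delta u$ is therefore required, and that is precisely what the paper's direct proof of \emph{iii} supplies: split $(-\Delta)^s u_n$ into the $m=\pm 1$ contribution times $K_s(1)\to 1$ plus a tail bounded by $2\|u\|_{\ell^\infty}\bigl[4^s\Gamma(1/2+s)/(\sqrt{\pi}\,\Gamma(1+s))-1\bigr]\to 0$ as $s\to 1^+$.
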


\begin{proof}
{\em i.} Let $s\in (1,2)$ and define $v_n\mathrel{\mathop:}= (-\Delta)u_n.$ Then, for $n\in\ZZ,$ we have
\bbb
(-\Delta)^s u_n & = & (-\Delta)^{s-1}(-\Delta)u_n\nnn\\
& = & (-\Delta)^{s-1}v_n\nnn\\
& = & \sum_{m\in\ZZ;\,m\neq n}(v_n - v_m)K_{s-1}(n-m)\nnn\\
& = & v_n\sum_{m\in\ZZ;\,m\neq n}K_{s-1}(n-m) - \sum_{m\in\ZZ;\,m\neq n}v_mK_{s-1}(n-m)\nnn\\
& = & A_{s-1}v_n - \sum_{m\in\ZZ;\,m\neq n}v_mK_{s-1}(n-m),\label{p1}
\eee
where 
\bb{p2}
A_s\mathrel{\mathop:}= \frac{4^s\Gamma(1/2+s)}{\sqrt{\pi}\,\Gamma(1+s)}.
\ee
Employing the definition of $v_n$ in \R{p1} yields
\bbb
(-\Delta)^s u_n & = & A_{s-1}[2u_n - u_{n-1} - u_{n+1}] - \sum_{m\in\ZZ;\,m\neq n}[2u_m - u_{m-1} - u_{m+1}]K_{s-1}(n-m)\nnn\\
& = & A_{s-1}[2u_n - u_{n-1} - u_{n+1}] - [2u_{n-1}-u_{n-2}-u_n]K_{s-1}(1)\nnn\\
&& ~~~~~ - [2u_{n+1}-u_n - u_{n+2}]K_{s-1}(-1) - \sum_{m\in\ZZ;\,m\neq n,n\pm 1}[2u_m - u_{m-1} - u_{m+1}]K_{s-1}(n-m)\nnn\\
& = & [2A_{s-1} + 2K_{s-1}(1)]u_n - \sum_{m\in\ZZ;\,m\neq 0} u_{n-m} [2K_{s-1}(m) - K_{s-1}(m-1) - K_{s-1}(m+1)]\nnn\\
& = & \gamma_{s-1}^{(1)}u_n - \sum_{m\in\ZZ;\,m\neq 0}u_{n-m}\gamma_{s-1}^{(2)}(m),\label{p3}
\eee
where 
$$\gamma_{s-1}^{(1)}\mathrel{\mathop:}= 2A_{s-1} + 2K_{s-1}(1)\quad \mbox{and}\quad \gamma_{s-1}^{(2)}(m)\mathrel{\mathop:}= 2K_{s-1}(m) - K_{s-1}(m-1) - K_{s-1}(m+1)$$
and we have used the fact that $K_{s-1}(-1) = K_{s-1}(1).$ In order to obtain the desired result, we must show that $\gamma_{s-1}^{(1)} = A_s$ and $\gamma_{s-1}^{(2)}(m) = K_s(m),\ m\in\ZZ\backslash\{0\}.$ We proceed by direct calculation. First, we note that
\bbbb
K_{s-1}(1) &=& \frac{4^{s-1}\Gamma(s-1/2)\Gamma(2-s)}{\sqrt{\pi}|\Gamma(1-s)|\Gamma(1+s)}\\
& = & \frac{4^{s-1}\Gamma(s-1/2)(1-s)\Gamma(1-s)}{\sqrt{\pi}|\Gamma(1-s)|s\Gamma(s)}\\
& = & \frac{4^{s-1}\Gamma(s-1/2)(s-1)}{\sqrt{\pi}s\Gamma(s)},
\eeee
since $s\in(1,2),$
which yields
\bbb
\gamma_{s-1}^{(1)} & = & 2A_{s-1} + 2K_{s-1}(1)\nnn\\
& = & 2\frac{4^{s-1}\Gamma(s-1/2)}{\sqrt{\pi}\Gamma(s)} + 2\l[\frac{4^{s-1}\Gamma(s-1/2)(s-1)}{\sqrt{\pi}s\Gamma(s)}\r]\nnn\\
& = & \frac{2\cdot 4^{s-1}\Gamma(s-1/2)}{\sqrt{\pi}\Gamma(s)}\l[1 + \frac{s-1}{s}\r]\nnn\\
& = & \frac{4^s\Gamma(s+1/2)}{\sqrt{\pi}\Gamma(1+s)}\nnn\\
& = & A_s.\label{p4}
\eee
In order to prove the remaining equality, we note that we can rewrite \R{kern} as
\bb{krewrite}
K_s(m) = \frac{(-1)^{m+1}\Gamma(2s+1)}{\Gamma(1+s+m)\Gamma(1+s-m)},\quad m\neq 0,
\ee
by employing the duplication and Euler reflection formula to each Gamma function, as was shown in \cite{ciaurri2015fractional}. Thus, we have
\bbb
\gamma_{s-1}^{(2)}(m) & = & 2K_{s-1}(m) - K_{s-1}(m-1) - K_{s-1}(m+1)\nnn\\
& = & (-1)^m\Gamma(2s-1)\l[\frac{-2}{\Gamma(s+m)\Gamma(s-m)}\r.\nnn\\
&& ~~~~~~~~~~ \l. - \frac{s+m-1}{\Gamma(s+m)\Gamma(s-m)(s-m)} - \frac{s-m-1}{(s+m)\Gamma(s+m)\Gamma(s-m)}\r]\nnn\\
& = & \frac{(-1)^m\Gamma(2s-1)}{\Gamma(s+m)\Gamma(s-m)}\l[-2 - \frac{s+m-1}{s-m} - \frac{s-m-1}{s+m}\r]\nnn\\
& = & \frac{(-1)^m\Gamma(2s-1)}{\Gamma(s+m+1)\Gamma(s+m-1)}\l[-4s^2 + 2s\r]\nnn\\
& = & \frac{(-1)^{m+1}\Gamma(2s+1)}{\Gamma(1+s+m)\Gamma(1+s-m)}\nnn\\
& = & K_s(m), \label{p5}
\eee
by \R{krewrite}. Combining \R{p4} with \R{p5} yields the desired result.

\vspace{3mm}

\no{\em ii.} This result follows from the the application of Lemma 9.2, from \cite{ciaurri2016nonlocal}, to \R{kern}.

\vspace{3mm}

\no{\em iii.} Following the ideas from {\em i.}, we can write
$$(-\Delta)^s u_n = P_1 + P_2,$$
where
$$P_1\mathrel{\mathop:}= (-u_{n-1}+2u_n-u_{n+1})K_s(1)\quad\mbox{and}\quad P_2\mathrel{\mathop:}= \sum_{m\in\ZZ;\,m\neq 0,1} (u_n - u_{n-m})K_s(m).$$
We obtain the desired result if we show $K_s(1) \to 1$ and $P_2 \to 0,$ as $s\to 1^+.$ To that end, we have
$$\lim_{s\to 1^+} K_s(1) = \lim_{s\to 1^+} \frac{4^s\Gamma(1/2+s)}{\sqrt{\pi}|\Gamma(-s)|}\cdot\frac{\Gamma(1-s)}{\Gamma(2+s)} = \frac{4\Gamma(3/2)}{\sqrt{\pi}\Gamma(3)} = 1.$$
Now, by the assumption that $u$ is bounded---that is, $\|u\|_{\ell^\infty} < \infty,$ where $\|\cdot\|_{\ell^\infty}$ is the norm  on $\ell^\infty(\ZZ)$---we have
$$\|P_2\|_{\ell^\infty} \le 2\|u\|_{\ell^\infty}\sum_{m\in\ZZ;\,m\neq 0,1} K_s(m) = 2\|u\|_{\ell^\infty}\l[\frac{4^s\Gamma(1/2+s)}{\sqrt{\pi}\Gamma(1+s)} - 1\r].$$
Since $\textstyle\lim_{s\to 1^+} 4^s\pi^{-1/2}\Gamma(1/2+s)/\Gamma(1+s) = 1,$ we have 
$$\|P_2\|_{\ell^\infty} \to 0, \quad \mbox{as}\ s\to 1^+,$$
which is the desired result.

\vspace{3mm}

\no{\em iv.} We begin by recalling that the discrete biharmonic operator is given by
\bb{biharm}
(-\Delta)^2u_n = u_{n-2} - 4u_{n-1} + 6u_n - 4u_{n+1} + u_{n+2}.
\ee
Just as before, by symmetry we can write
$$(-\Delta)^su_n = S_1 + S_2 + S_3,$$
where
$$S_1 \mathrel{\mathop:}= (-u_{n-1}+2u_n-u_{n+1})K_s(1),\quad S_2 \mathrel{\mathop:}= (-u_{n-2} + 2u_n - u_{n+2})K_s(2),$$
and
$$S_3 \mathrel{\mathop:}= \sum_{m\in\ZZ;\,m\neq 0,1,2} (u_n - u_{n-m})K_s(m).$$
Similar to before, we show that $K_s(1) \to 4,$ $K_s(2) \to -1,$ and $S_3 \to 0,$ as $s\to 2^-,$ where
$$\lim_{s\to 2^-}K_s(1) = \lim_{s\to 2^-}\frac{4^s\Gamma(1/2+s)}{\sqrt{\pi}|\Gamma(-s)|}\cdot\frac{\Gamma(1-s)}{\Gamma(2+s)} = \frac{32\Gamma(5/2)}{\sqrt{\pi}\Gamma(4)} = 4$$
and
$$\lim_{s\to 2^-}K_s(2) = \lim_{s\to 2^-}\frac{4^s\Gamma(1/2+s)}{\sqrt{\pi}|\Gamma(-s)|}\cdot\frac{\Gamma(2-s)}{\Gamma(3+s)} = \frac{-16\Gamma(5/2)}{\sqrt{\pi}\Gamma(5)} = -1.$$
Once again, by the assumption that $u$ is bounded, we have
$$\|S_3\|_{\ell^\infty} \le 2\|u\|_{\ell^\infty} \sum_{m\in\ZZ;\,m\neq 0,1,2}K_s(m) = 2\|u\|_{\ell^\infty}\l[\frac{4^s\Gamma(1/2+s)}{\sqrt{\pi}\Gamma(1+s)} - 4 + 1\r].$$
Since $\textstyle\lim_{s\to 2^-} 4^s\pi^{-1/2}\Gamma(1/2+s)/\Gamma(1+s) = 3,$ we have 
$$\|S_3\|_{\ell^\infty} \to 0, \quad \mbox{as}\ s\to 2^-,$$
which is the desired result.
\end{proof}

Thus, we may use \R{solform} and \R{kern} for numerical approximations of the random discrete fractional Schr{\"o}dinger operator given in \R{schro1}. Finally, in order to provide the readers with a more concrete understanding of the discrete weights studied in this section, we include a plot of $K_s(m),$ for various values of $s\in (0,2).$ Of particular interest is the smooth transition between the weight functions in the regimes $s\in(0,1)$ and $s\in(1,2),$ while there is an abrupt qualitative shift at $s=1.$ Moreover, one is able to clearly see the rapid decay of the values of $K_s(m),$ as $m\to\pm\infty.$

\begin{figure}[H]
\centering
\includegraphics[width=2.63in,height=1.68in]{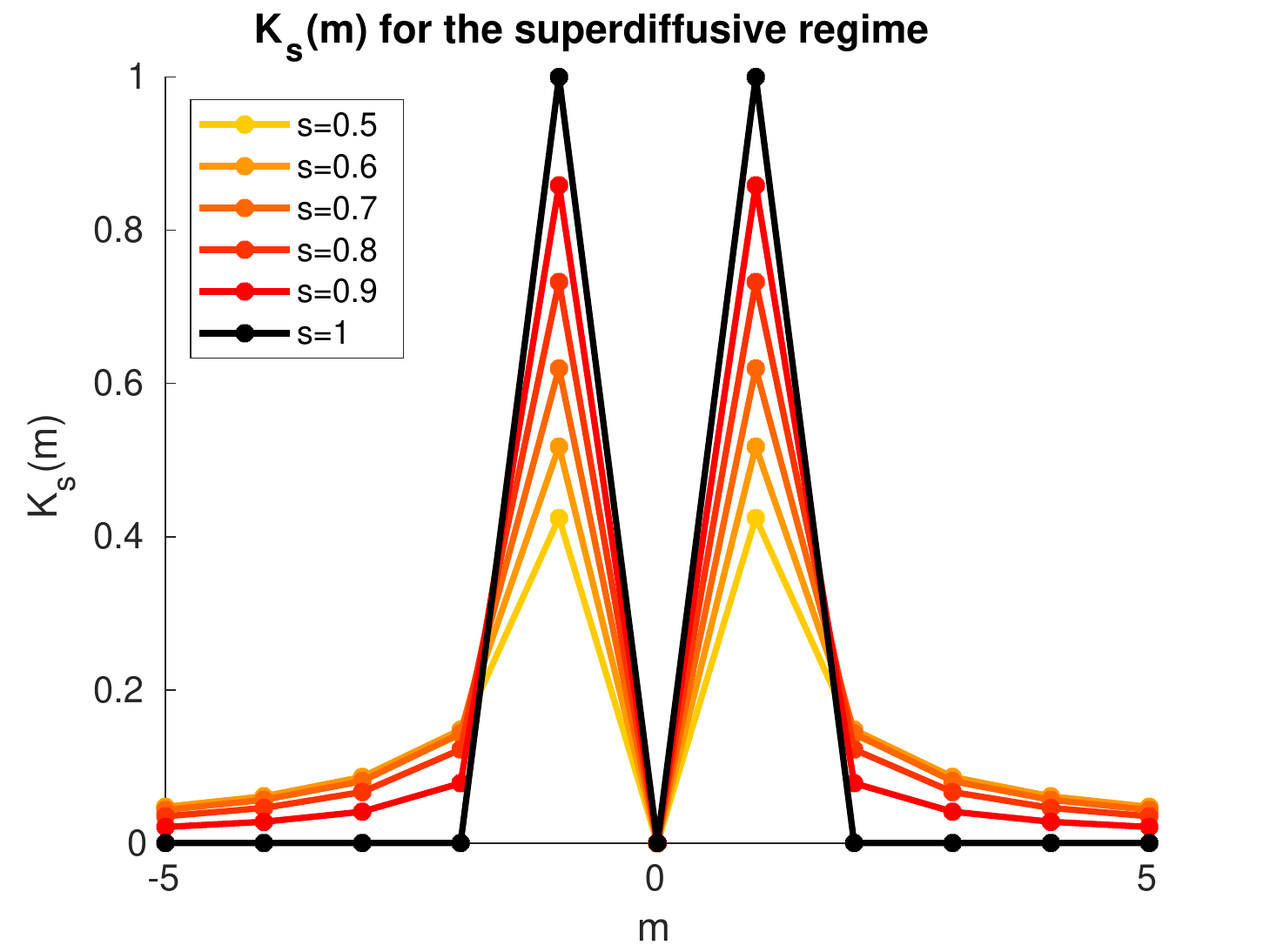}
\includegraphics[width=2.63in,height=1.68in]{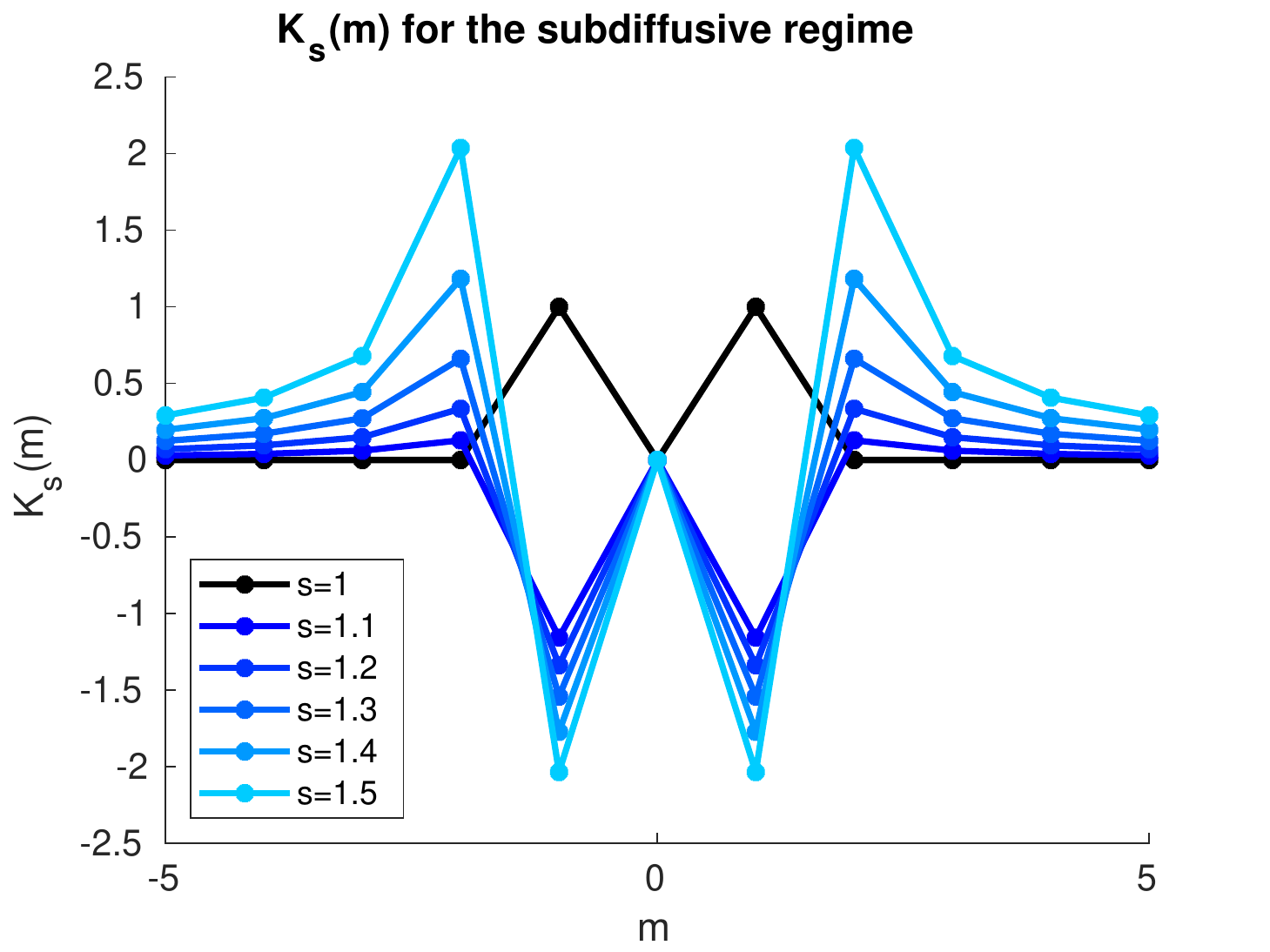}
\caption{Plots of the discrete weight functions for various $s-$values. [LEFT] A plot of the discrete weights corresponding to the superdiffusive parameter regime. [RIGHT] A plot of the discrete weights corresponding to the subdiffusive parameter regime. The monotonic transition within each parameter regime is evident, with a clear qualitative transition occurring as $s\to 1^+.$ 
}
\end{figure}

\subsection{Physical Interpretation}

A physical interpretation of the discrete fractional Laplacian has been provided in \cite{ciaurri2016nonlocal}. We include that description for completeness and expand it to include our newly developed model. Let $u$ be a discrete harmonic function on $\ZZ,$ that is, $-\Delta u = 0.$ Then $u$ also satisfies the following discrete mean value property:
\bb{dis1}
u_n = \frac{1}{2}u_{n-1}+\frac{1}{2}u_{n+1}.
\ee
This classical result provides the physical intuition that is well understood for the classical discrete Laplacian in one-dimension. That is, by \R{dis1}, we see that a discrete harmonic function represents a physical situation in which a particle will jump to either of the two adjacent nodes with equal probability, namely, one-half. This intuition may be generalized to the current situation to provide a physical interpretation of the fractional Laplacian. 

Assume that $u$ is a fractional discrete harmonic function, that is, $(-\Delta)^su = 0.$ Then by \R{solform} we have that $u$ satisfies the following discrete fractional mean value property:
\bb{dis2}
u_n = \sum_{m\in\ZZ,m\neq n}u_mP_s(n-m),
\ee
where
\bb{disp}
P_s(m)\mathrel{\mathop:}= \frac{1}{A_s}K_s(m),
\ee
$A_s \mathrel{\mathop:}= \textstyle\sum_{m\in\ZZ} K_s(m),$ and $P_s(0) = 0.$ In this representation, $P_s(m)$ is a probability distribution on $\ZZ,$ allowing one to interpret the fractional case in a similar fashion to the classical result given by \R{dis1}. That is, by \R{dis2}, it follows that a fractional discrete harmonic function describes a particle which may jump to {\em any} point in $\ZZ$ and the probability that the particle jumps from point $n$ to point $m$ is given by $P_s(n-m).$ If $s\in (0,1),$ Theorem 1 implies that the probability of jumping from point $n$ to point $m$ is proportional to $|n-m|^{-(1+2s)}.$ In this situation, as $s\to 1^-,$ the probability of jumping from $n$ to an adjacent point tends to one, while the probability of jumping to a nonadjacent point tends to zero. Further, as $s\to 0^+,$ the probability of jumping from the point $n$ to any point in $\ZZ$ tends to zero, resulting in no jumps, as a value of $s=0$ represents no diffusion.

Theorem 2 also allows for a physical interpretation to hold for $s\in (1,2),$ although the construction of this interpretation requires more care. In the case that $s\in (1,2),$ we may write $s = 1 + \tilde{s},$ where $\tilde{s}\in(0,1).$ We can then decompose our operator as
\bb{factored_op}
(-\Delta)^su_n = (-\Delta)^{\tilde{s}}(-\Delta)u_n
\ee
and perform the computation in stages. First, we apply the standard Laplace operator to $u_n$ to obtain
\bb{factored_v}
v_n \mathrel{\mathop:}= (-\Delta)u_n,
\ee
whose interpretation is exactly analagous to the classical setting: that is, a particle located at position $n$ will jump to either $n-1$ or $n+1$ with equal probability. We then apply the nonlocal operator to $v_n;$ that is, we compute $(-\Delta)^{\tilde{s}}v_n.$ This action has the same probabilistic interpretation from the previous paragraph (see \R{dis2} and \R{disp}). 
However, when the two
actions are composed together, there is the possibility of the particle moving back to its original location (thus reducing the probability that it will land at $n-1$ or $n+1$)
which is demonstrated by the possibility of negative weights (see Figure 1).

It is important to note that the negative weights does not translate into a negative probabilities. Instead, the negative weights assigned to the nearest neighbors should be interpreted as nonzero probabilities of the particle staying at its original location.
Moreover, when $s\in(1,2),$ it is the case that these negative coefficients will only occur at the ``nearest neighbor'' sites. As before, when $s\to 1^+,$ the probability of jumping from $n$ to an adjacent point tends to one, while the probability of jumping to a nonadjacent point tends to zero. Now, as $s\to 2^-,$ we have that the probabilistic interpretation converges to exactly that of the discrete biharmonic operator \cite{vanderbei1984probabilistic,mazzucchi2013probabilistic}.

\subsection{Nonlinear Mean Squared Displacement}

We now provide a more explicit description of the relationship between the fractional power, $s,$ in $H_{s,\epsilon},$ and the nonlinear mean squared displacement, $\beta,$ mentioned in the Introduction. In order to demonstrate this relationship, we consider the following non-local Cauchy problem
\bb{non1}
\l\{\begin{array}{rcll}
v_{t}(x,t)  & = & -(-\Delta)^s v(x,t) , & x\in\ZZ,\ t > 0,\\
v(x,0) & = & \varphi(x), & x\in \ZZ.
\end{array}\r.
\ee
It was shown in \cite{CIAURRI2015119} that the solution to \R{non1}, for appropriate bounded initial values, is given by
\bb{nonsol}
v(x,t) = \sum_{k\in\ZZ} G^s(x-k,t)\varphi(k),
\ee
where
\bb{W}
G^s(x,t) \mathrel{\mathop:}= \frac{1}{2\pi}\int_{-\pi}^\pi e^{t(4\sin^2(z/2))^s}e^{-ixz}\,dz,
\ee
when $s\in(0,1).$ It is relatively straightforward to generalize this result to obtain a representation for $s\in(0,2).$

\begin{theorem}
The solution to \R{non1} is given by \R{nonsol} when $s\in (1,2).$
\end{theorem}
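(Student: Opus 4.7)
The plan is to mirror the proof given in \cite{CIAURRI2015119} for the range $s\in(0,1)$, using the semi-discrete Fourier transform to diagonalize $(-\Delta)^s$, and to extend the crucial symbol identity to the regime $s\in(1,2)$ by means of the factorization $(-\Delta)^s=(-\Delta)^{s-1}(-\Delta)$ already established in \R{s>1}. Specifically, if $\mathcal{F}$ denotes the Fourier series transform on $\ZZ$, then a direct computation shows
\begin{equation*}
\mathcal{F}[(-\Delta)u](z)=4\sin^2(z/2)\,\hat u(z),
\end{equation*}
and invoking Theorem 1 for the exponent $s-1\in(0,1)$ together with \R{s>1} yields the extended identity
\begin{equation*}
\mathcal{F}[(-\Delta)^s u](z)=(4\sin^2(z/2))^{s-1}\cdot 4\sin^2(z/2)\,\hat u(z)=(4\sin^2(z/2))^s\,\hat u(z)
\end{equation*}
for $s\in(1,2)$, valid for $u$ in a suitable class (e.g.\ $u,(-\Delta)u\in\ell_{s-1}$).

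With the symbol identity in hand, I would take the Fourier transform of \R{non1} in the spatial variable to reduce the Cauchy problem to the ordinary differential equation
\begin{equation*}
\partial_t\hat v(z,t)=-(4\sin^2(z/2))^s\,\hat v(z,t),\qquad \hat v(z,0)=\hat\varphi(z),
\end{equation*}
whose unique solution is $\hat v(z,t)=e^{-t(4\sin^2(z/2))^s}\hat\varphi(z)$. Since multiplication in the Fourier variable corresponds to convolution on $\ZZ$, inverting $\mathcal{F}$ produces exactly the representation \R{nonsol} with kernel $G^s$ given by \R{W}. Because the symbol $(4\sin^2(z/2))^s$ is continuous, bounded, and nonnegative on $[-\pi,\pi]$ for every $s\in(1,2)$, all Fourier manipulations above are legitimate for $\varphi\in\ell^2(\ZZ)$, and extend to bounded $\varphi$ by a density or approximation argument.

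The main obstacle will be verifying that the candidate $v(\cdot,t)$ defined by \R{nonsol} actually satisfies the equation pointwise in the sense of Theorem 2; i.e.\ that $v(\cdot,t)\in\ell_s$ for each $t>0$, that one may differentiate the series in $t$ termwise, and that $(-\Delta)^s$ commutes with the infinite sum over $k$. This hinges on obtaining decay and regularity estimates for $G^s(x,t)$ and $\partial_t G^s(x,t)$ in $x$, uniformly on compact $t$-intervals; the presence of the higher power $s>1$ sharpens the symbol's behaviour near $z=0$ compared to the $s\in(0,1)$ case, but the estimates follow the same pattern of integration by parts in $z$ applied to \R{W}, combined with the bounds from Theorem 2 part \emph{ii}. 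Uniqueness is then immediate from the Fourier-symbol argument applied to the difference of two solutions with zero initial data.
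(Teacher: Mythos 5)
Your approach is sound and is essentially the one the paper intends: the paper omits the proof of Theorem~3, saying only that it combines the factorization argument from the proof of Theorem~2 (i.e.\ $(-\Delta)^s=(-\Delta)^{s-1}(-\Delta)$ with $s-1\in(0,1)$) with the Fourier-multiplier/semigroup methods of \cite{CIAURRI2015119}, which is precisely what you do. One minor remark: your (correct) computation produces the decaying symbol $e^{-t(4\sin^2(z/2))^s}$, whereas \R{W} as printed carries a positive exponent in $t$; this appears to be a sign typo in the paper rather than an error in your argument.
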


We omit the proof of Theorem 3, for brevity, but note that the approach is similar to our methods employed in the proof of Theorem 2, combined with the methods employed in \cite{CIAURRI2015119}. Theorem 3 allows for the representation of all parameter regimes of interest via the solution form \R{nonsol}. It is then clear that $G^s(x,t)$ acts as a discrete Green's function for the problem \R{non1}. With the above representation in hand, one may calculate the mean squared displacement to be
\bb{msd}
\langle (x-x_0)^2 \rangle \sim t^{1/s},
\ee
where $s\in (0,2).$ We want to make a few notes regarding the expression in \R{msd}. We have omitted the calculations required to obtain \R{msd} as they are similar to that of the classical case, except one must employ Mittag-Leffler and Fox H-functions. Note that the contributions from the random coefficients in \R{schro1} do not contribute to the mean squared displacement calculated by \R{msd} due to the fact that we compute an expected value as part of the calculation. Also note that \R{msd} is an asymptotic relationship at $t\to\infty.$ Using this relation, we see that the mean squared displacement scales inversely with the fractional power of the Laplace operator, and that we obtain the classical linear mean squared displacement as $s\to 1.$ 


\section{Spectral Approach to Transport in Disordered Systems}

We have successfully employed the spectral approach in numerous settings and have demonstrated it to be quite effective in the study of transport behavior -- see, for instance, \cite{Liaw2013,1751-8121-47-30-305202,2053-1591-3-12-125904,PhysRevB.96.235408,kostadinova2018transport}. In this section we provide a brief overview of the spectral method employed in \cite{Liaw2013}, as it applies to the discrete fractional Schr{\"o}dinger operator.

On $\ell^2(\ZZ)$ -- the space of two-sided  square-summable sequences -- we consider the random discrete fractional Schr{\"o}dinger operator defined in \R{schro1}. That is, we consider
$$
H_{s,\epsilon} \mathrel{\mathop:}= (-\Delta)^s + \sum_{i\in\ZZ}\epsilon_i\,\langle \,\cdot\,,\delta_i\rangle \delta_i
$$
where $\langle \cdot, \cdot\rangle$ denotes the $\ell^2 (\ZZ)$ inner product, $\delta_i$ is the standard basis of $\ZZ$, and $\epsilon_i$ are random variables taken to be i.i.d. according to the uniform distribution on $[-c/2,c/2]$. We are most interested in the change in transport behavior as we vary the diffusion parameter $s.$

A vector $\varphi$ is cyclic for a (bounded) operator $T$ on a separable Hilbert space $X$, if the forward orbit of $\varphi$ under $T$ has dense linear span ({\em i.e.}, $X = \text{clos}\,\text{span}\{T^n \varphi:n\in \NN\cup\{0\}\}$).
The central result (see Corollary 3.2 of \cite{Liaw2013}) behind the spectral method can be formulated as follows:
\begin{quote}
{\em If we can find a (non-trivial) vector that is not cyclic for $(H_{s,\epsilon})$ with positive probability, then almost surely there are de-localized states.}
\end{quote}
As a side note, we mention that the existence of de-localized states indicates transport by the RAGE theorem, see e.g., Section 1.2 of \cite{MR2603225}.

The non-cyclicity of a vector follows if its forward orbit stays away from a particular direction. In other words, if we can find a vector $v$ that remains at a positive distance from $\text{span}\{H_{s,\epsilon}^k \varphi:k\in \NN, k\le n\}$ as $n\to\infty,$ then the vector is non-cyclic. With some linear algebra (see Proposition 3.1 of \cite{Liaw2013}) this distance can be expressed explicitly by
\bb{distance}
D_{s,\epsilon}^n\mathrel{\mathop:}=\sqrt{1-\sum_{k=0}^n\frac{\langle v, m_k\rangle^2}{\langle m_k, m_k\rangle}}\,,
\ee
where $\{m_k\}$ is the orthogonal sequence of $\ell^2(\ZZ)$ vectors obtained from applying the Gram--Schmidt algorithm to $\{\varphi, H_{s,\epsilon} \varphi, H^2_{s,\epsilon} \varphi, \hdots\}$.

In the current study, we choose $v$ to be a linear combination of basis vectors, in order to account for the nonlocality of the action of $H_{s,\epsilon}.$ However, it is worth noting that spectral theory allows for any $v\in\ell^2(\ZZ)$ to be an appropriate choice. With this choice, we investigate the dependence of (de)localization on the diffusion parameter $s$. To emphasize this point, we will write $ D_{s,\epsilon}^n$. 
Summarizing the theory, our numerical investigations employ the following tool:
\[
\lim_{n\to\infty}D_{s,\epsilon}^n > 0
\qquad\Rightarrow\qquad\text{de-localization}.
\]

We also note that the operator $H_{s,\epsilon}$ is self-adjoint in $\ell^2(\ZZ).$ This follows immediately from the spectral theorem and the fact that the discrete Laplacian in self-adjoint in $\ell^2(\ZZ).$ This fact allows us to apply the efficient computational techniques outlined in \cite{Liaw2013}. We omit these details for brevity, but use them when performing the numerical experiments in the ensuing section.

\section{Numerical Experiments}





This section outlines the numerical method and provides numerical simulations which verify the proposed approach. In Section 4.1, the formulation of the fractional Laplacian given in Theorems 1 and 2 is used to justify the numerical method. Section 4.2 then provides relevant simulation results for the method obtained for various parameter choices. Finally, an orthogonality check to confirm that the employed forward Gram--Schmidt algorithm creates orthogonal vectors in this setting is outlined and performed in Section 4.3. 

It is worth noting that the experiments presented do not provide rigorous justification or proof for the occurrence of localized or extended states. Rather, they provide a qualitative analysis of the transport behavior of the modeled systems as the fractional power $s$ is varied. A more detailed study is planned in forthcoming work, with initial efforts being focused on an expanded numerical study with accompanying physical and mathematical interpretations, including the scope and limitations of the new technique. Moreover, we will focus on applying our results to specific physical problems.

\subsection{Motivation of Computational Method}



As outlined in Section 3.1, the spectral approach employed in this study requires the examination of the forward orbit of arbitrarily chosen initial vectors under the action of the random discrete fractional Schr{\"o}dinger operator. Thus, our computational method must accurately and efficiently apply the operator given by \R{schro1}. The case $s=1$ is relatively straightforward to implement due to the definition of the discrete Laplacian given in \R{dlap}. This definition will be used as a special case of our current approach and has been considered in more detail in \cite{Liaw2013,PhysRevB.96.235408,kostadinova2018transport}. 

We consider the one-dimensional lattice $\ZZ$ and an arbitrary function $u\,:\,\ZZ\to \RR.$ Then by \R{solform}, we have
$$(-\Delta)^s u_n = \sum_{m\in\ZZ;\,m\neq n} (u_n-u_m)K_s(n-m),$$
for $s\in(0,2)$ and the kernel $K_s$ given by \R{kern}. Due to the symmetry of the kernel, we may rewrite the above and obtain
\bb{comp1}
(-\Delta)^s u_n = \sum_{m\in\NN} (2u_n-u_{n-m}-u_{n+m})K_s(m)
\ee
for $s\in(0,2).$ Thus, for some $1 \ll M\in\NN,$ we have
\bb{comp2}
(-\Delta)^s u_n = \sum_{m=1}^M (2u_n-u_{n-m}-u_{n+m})K_s(m) + R_M(u_n),
\ee
where
\bb{comp3}
R_M(u_n) \mathrel{\mathop:}= \sum_{m=M+1}^\infty (2u_n-u_{n-m}-u_{n+m})K_s(m).
\ee

For simplicity, our computational method will disregard the remainder term, $R_M(u_n).$ A similar truncation was employed for the simulations in \cite{ciaurri2015fractional,ciaurri2016nonlocal}, as well. The remainder is guaranteed to be bounded and well-controlled by our choice of $M$ due to \R{ineq1}. In fact, we have
\bbbb
|R_M(u_n)| & \le & 4\max_{m>M}|u_m|\int_M^\infty K_s(x)\,dx\\
& = & B_s \int_M^\infty \frac{\Gamma(x-s)}{\Gamma(x+1+s)}\,dx\\
& = & \tilde{B}_s \int_M^\infty \int_0^\infty e^{-(x-s)y}(1-e^{-y})^{2s}\,dy\,dx\\
& \le & \tilde{B}_s\int_M^\infty\int_0^\infty e^{-(x-s)y}\,dy\,dx\\
& = & \frac{\tilde{B}_s}{(M-s)^2},
\eeee
where 
$$B_s \mathrel{\mathop:}= 4\max_{m>M}|u_m|\cdot\frac{4^s\Gamma(1/2+s)}{\sqrt{\pi}|\Gamma(-s)|} \quad \mbox{and}\quad \tilde{B}_s\mathrel{\mathop:}= \frac{B_s}{\Gamma(1+2s)},$$ 
and we have employed the property
\bb{gammaprop}
\frac{\Gamma(x-s)}{\Gamma(x+1+s)} = \frac{1}{\Gamma(1+2s)}\int_0^\infty e^{-(x-s)y}(1-e^{-y})^{2s}\,dy,
\ee
which is valid for $x-s>0$ \cite{LAFORGIA2012833}. Moreover, we have that
$$\max_{s\in(0,2)} \tilde{B}_s \approx 1.27324\times\max_{m>M}|u_m|,$$
occurring when $s = 1/2,3/2,$ which yields
\bb{r_m}
R_M(u_m) \sim \frac{1}{M^2}.
\ee

Thus, dropping the term $R_M(u_n)$ in \R{comp2} appears reasonable. By doing so, we obtain
\bb{comp5}
H_{s,\epsilon}u_n \approx \sum_{m=1}^M (2u_n - u_{n-m}-u_{n+m})K_s(m) + \epsilon_nu_n,
\ee
where the $\epsilon_n$ are i.i.d.~according to the uniform distribution on $[-c/2,c/2],$ for some fixed $c>0.$ The approximation given by \R{comp5} is employed in the following computations. Since the goal of the current project is to explore the behavior of \R{schro1}, we leave detailed error and convergence analysis of our approximation given by \R{comp5} for future work.

\subsection{Numerical Simulations in a One-Dimensional Disordered System}



Consider the discrete random fractional Schr{\"o}dinger operator given by \R{schro1} with independently and identically distributed random variables $\epsilon_i.$ The spectral approach outlined in Section 3 dictates that if we can find a disorder $c>0$ for which 
\bb{num1}
D_{s,\epsilon}\mathrel{\mathop:}= \lim_{n\to\infty} D^n_{s,\epsilon} > 0,
\ee
with nonzero probability, then \R{schro1} will exhibit de-localized energy states. We will now explain how one can verify de-localization numerically.

In the numerical experiments, we initially fix $c$ and fix one computer-generated realization of the random variables $\epsilon_i.$ In our case, these random variables are uniformly distributed in $[-c/2,c/2].$ We then calculate the values of $D_{s,\epsilon}^n$ for $n\in\{0,1,2,...\}$ and each $s-$value of interest. Since $D_{s,\epsilon}^n$ is a positive, monotonically decreasing sequence, one can construct approximate lower bounds (with respect to the probability distribution) of the limit $D_{s,\epsilon}.$ This is exactly the approach taken in \cite{PhysRevB.99.024115,Liaw2013,1751-8121-47-30-305202,2053-1591-3-12-125904,PhysRevB.96.235408}. In these works, it is noted that the distance values may decay logarithmically, so the authors performed a careful re-scaling of the horizontal axis by an exponent $a<0.$ A lower bound for all possible $y-$intercepts of any linear approximation of the re-scaled distance values is then computed, and serves as a lower bound for $D_{1,\epsilon}.$ A positive lower bound implies the existence of extended states, with the results being quite strong due to the overestimation of the likely lower bound. In fact, it was shown there that de-localized states exist for nontrivial disorder values in various two-dimensional geometries. This being the first exploration of its kind, our primary goal is to demonstrate the qualitative difference between the anomalous and classical diffusion cases. To aid in the interpretation of the qualitative differences in the plots, we provide the following criterion: 

\begin{quote}
{\bf Numerical Interpretation Criterion:} For a fixed realizations of $\epsilon_i,$ a fixed vector $v,$ and the integer $n$ sufficiently large, if $D_{s_2,\epsilon}^n > D_{s_1,\epsilon}^n$ and $H_{s_1,\epsilon}$ exhibits de-localized states, then the $H_{s_2,\epsilon}$ exhibits de-localized states. The converse of this statement holds, as well.
\end{quote}

Since it is known that all energy states for \R{schro1} will be localized when $s=1,$ it follows that $D_{1,\epsilon} = 0,$ for all choices of the vector $v.$ This important result provides a baseline, for various non-zero $c-$values, to be combined with the aforementioned Numerical Interpretation Criterion. Any distance plots decaying more slowly than those for $s=1$ can be interpreted as exhibiting {\em enhanced transport behavior}, as compared to the known localized behavior of $s=1.$ Similarly, any distance plot decaying more quickly than those for $s=1$ can be interpreted as exhibiting {\em inhibited transport behavior}, as compared to the known localized behavior of $s=1.$ However, it is important to note that such occurrences do not guarantee the existence (or lack thereof) of de-localized states.



As an example, we compare the distance plots for $s=0.9,1.0,1.1,$ for several different disorder values. The values $s=0.9$ and $s=1.1$ are arbitrarily chosen superdiffusive and subdiffusive parameters, respectively, while $s=1$ corresponds to the classical case. In Figure 2, we have fixed $M = 300$ (as in \R{comp5}) and 
\bb{vchoice}
v = \sum_{i=0}^{M-1} \l[(-1)^i\delta_{1+i} + (-1)^{i+1}\delta_{-1-i}\r]
\ee
(as in \R{distance}), for each simulation. More details regarding the choice of $v$ will be explored in future work. However, we note that this choice of $v$ is motivated by the need to consider the nonlocal effects of the underlying operator. The parameter regime for $c$ has been purposely chosen small, due to the expected dominance of localization behavior in one-dimensional systems. Each plot is the result of averaging ten simulations, in an effort to reduce the spurious effects of certain realizations of the random perturbations.


\begin{figure}[H]
\centering
\includegraphics[width=2.63in,height=1.68in]{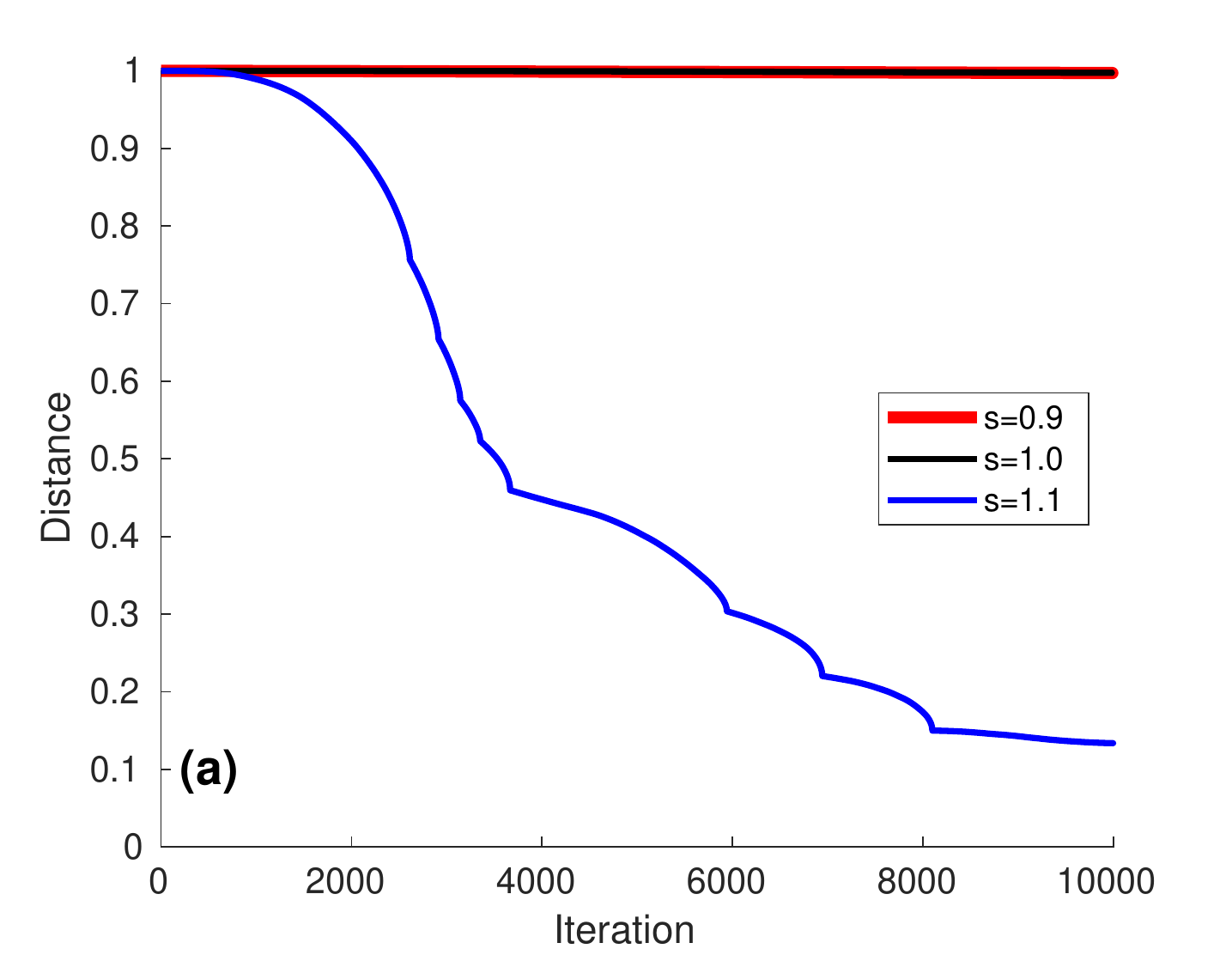}
\includegraphics[width=2.63in,height=1.68in]{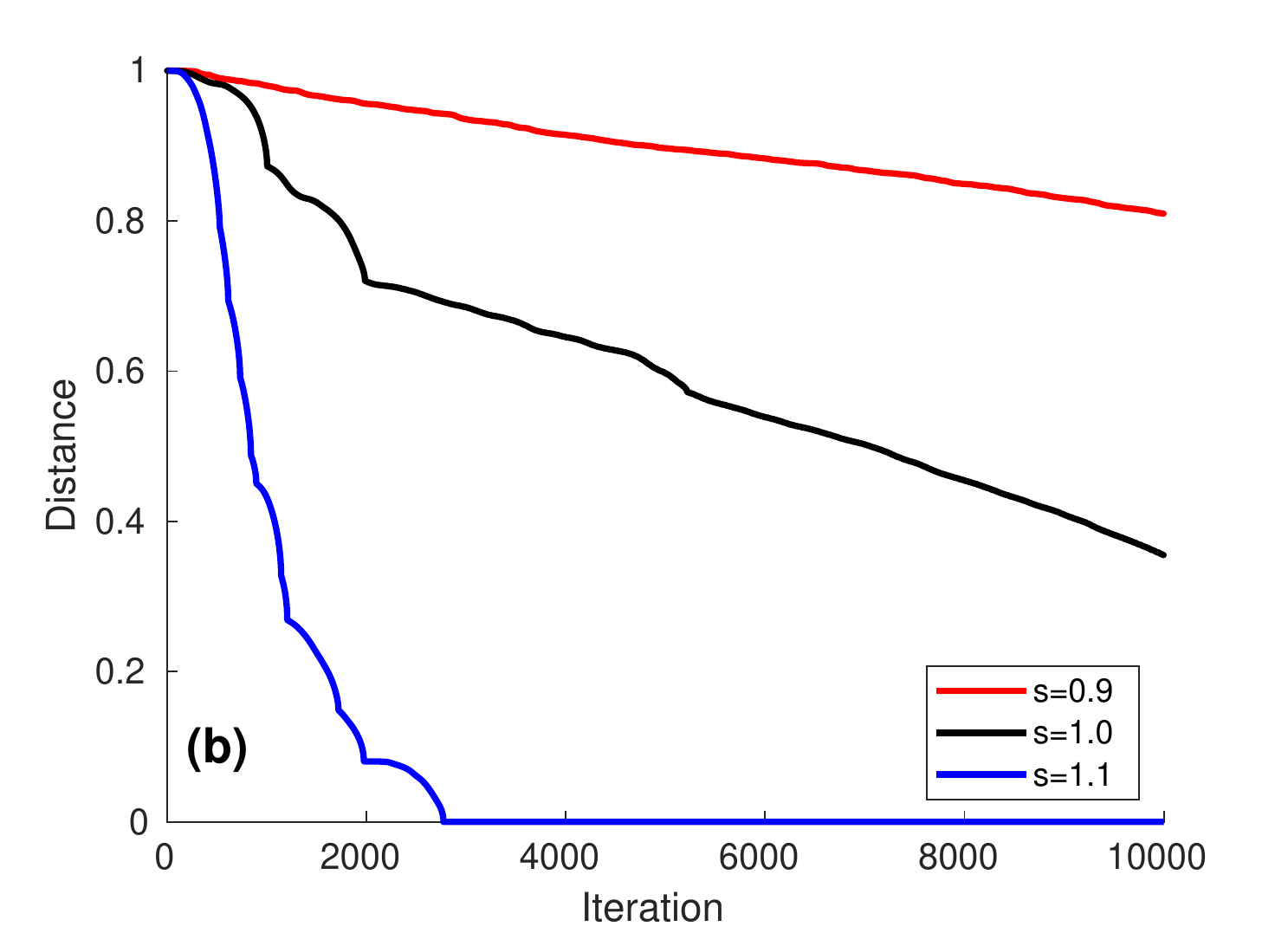}
\vspace{5mm}
\includegraphics[width=2.63in,height=1.68in]{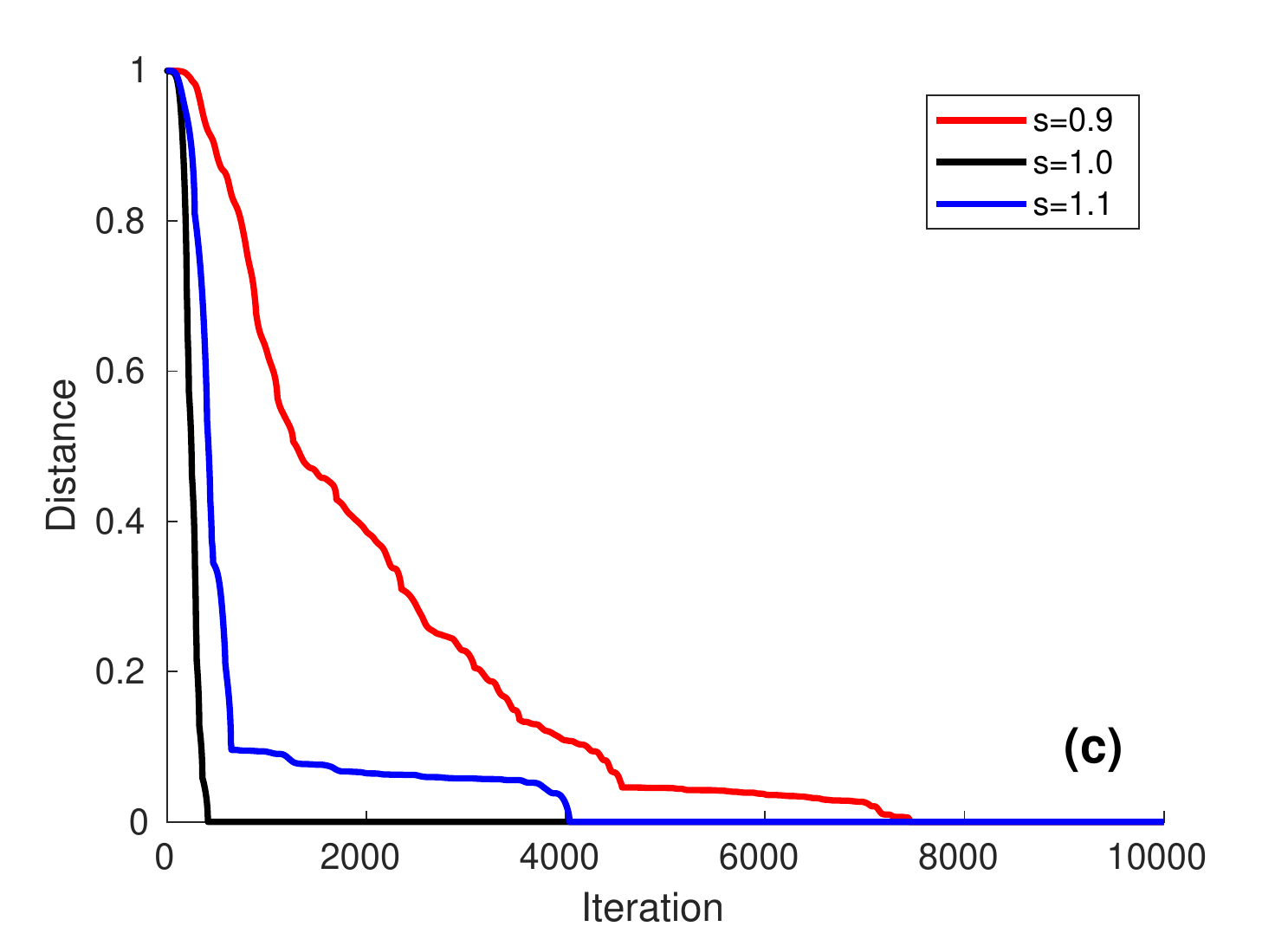}
\includegraphics[width=2.63in,height=1.68in]{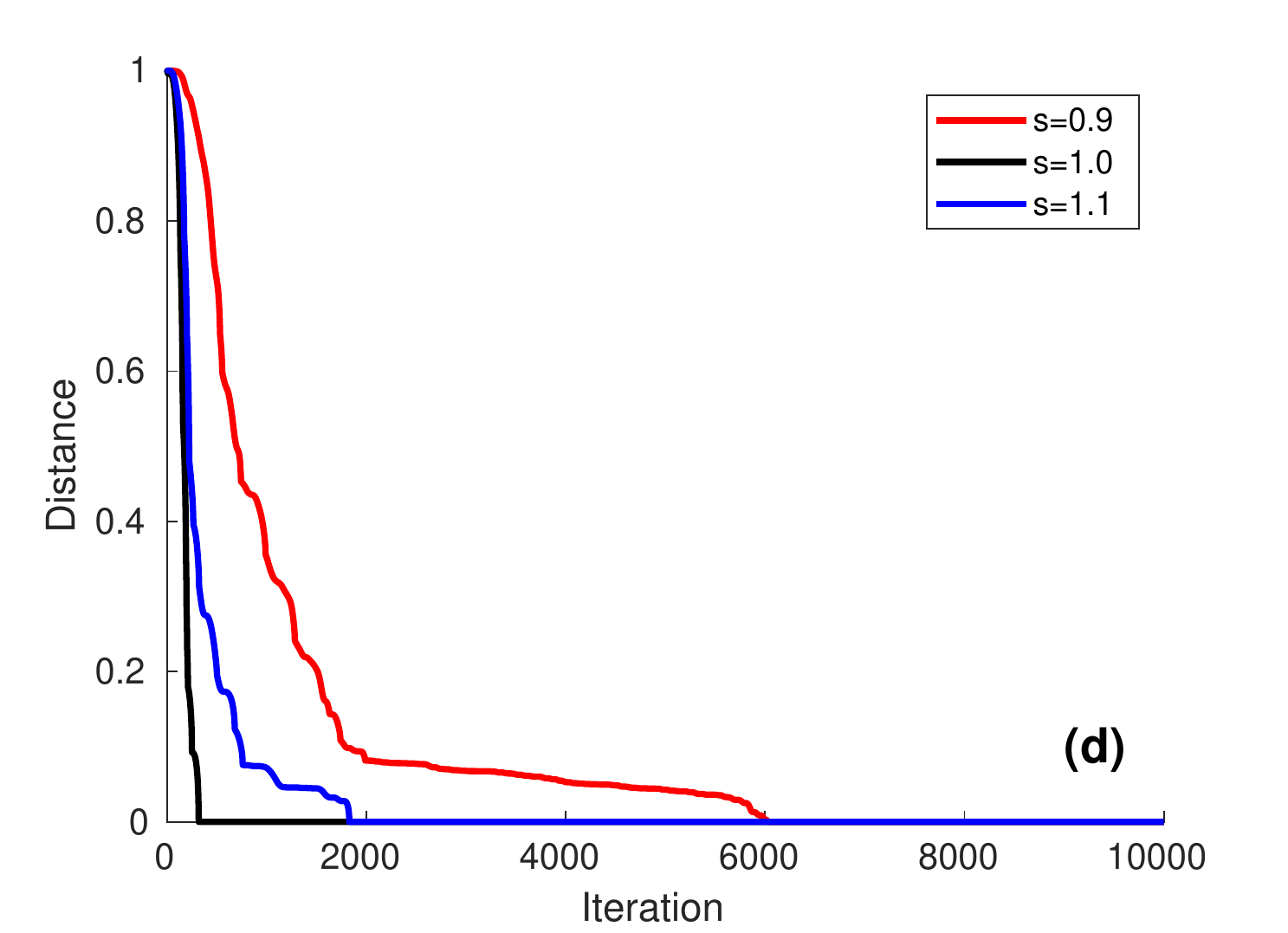}
\caption{Plots of the function $D_{s,\epsilon}^n$ for various parameter regimes (with $n$ being the number of iterations). Each plot considers the cases $s=0.9,1,1.1.$ The $c-$values considered are (a) $c = 0.0001,$ (b) $c = 0.001,$ (c) $c= 0.005,$ and (d) $c=0.01.$ The graphs of $D_{s,\epsilon}^n$ decay more quickly as $c$ increases, with the graph corresponding to $s=0.9$ decaying more slowly than the graphs of $s=1$ and $s=1.1,$ in each case.}
\end{figure}

From Figure 2, it is clear that the superdiffusive case, $s=0.9,$ results in a distance plot that decays more slowly, with respect to the number of time steps, than the other two cases in each of the presented plots. We see that the subdiffusive graphs decay more quickly than the classical case for the smallest values of disorder, $c,$ but then does so more slowly for $c\ge 0.005.$ 
We remark that the first two plots are the expected results, while these latter two plots could be due to random variations of the underlying probability realizations. 
This slower decay as $c$ increases is not surprising, as the subdiffusive case represents nonlocal interactions and the $s=1$ case represents local (nearest neighbor) interactions. The nonlocality of the operator when $s=0.9$ contributes to this slower decay due to the nonzero probability of a long-range jump.
The complicated nature of these nonlocal, subdiffusive interactions were outlined in Section 2.2. 
Regardless of the $c-$value, we see that both the subdiffusive and classical cases have a very steep initial decent, while the superdiffusive case descends more slowly. 

These plots demonstrate that the transport behavior, especially for small disorder, can be quite different. It is interesting to point out the notably slower decay of the superdiffusive case, as compared to the classical case, which is known to localize \cite{MR2603225}. These observations raise the question as to whether de-localized energy states can be observed when $s\in(0,1).$ These results will be explored both computationally and theoretically in more detail in the future.

These behaviors are still preserved for a different choice of $v,$ as demonstrated in Figure 3. 
We still consider vectors of the form \R{vchoice}, however, we consider the two distinct vectors: one corresponding to $M=300$ and one corresponding to $M=400.$
As before, there is an expectation of slight differences in the results due to fluctuations in the realizations of the random parameters. However, we still see a strong agreement in qualitative behavior. Figures 3a and 3b represent results for two different $c-$values, but the same fixed vector $v$ with $M=300.$ Figures 3c and 3d show results for the same $c-$values as Figures 3a and 3b, but compute the distance with a vector $v$ with $M=400.$ 
We see that even with different choices of the vector $v,$ the plots agree reasonably well. 


\begin{figure}[H]
\centering
\includegraphics[width=2.63in,height=1.68in]{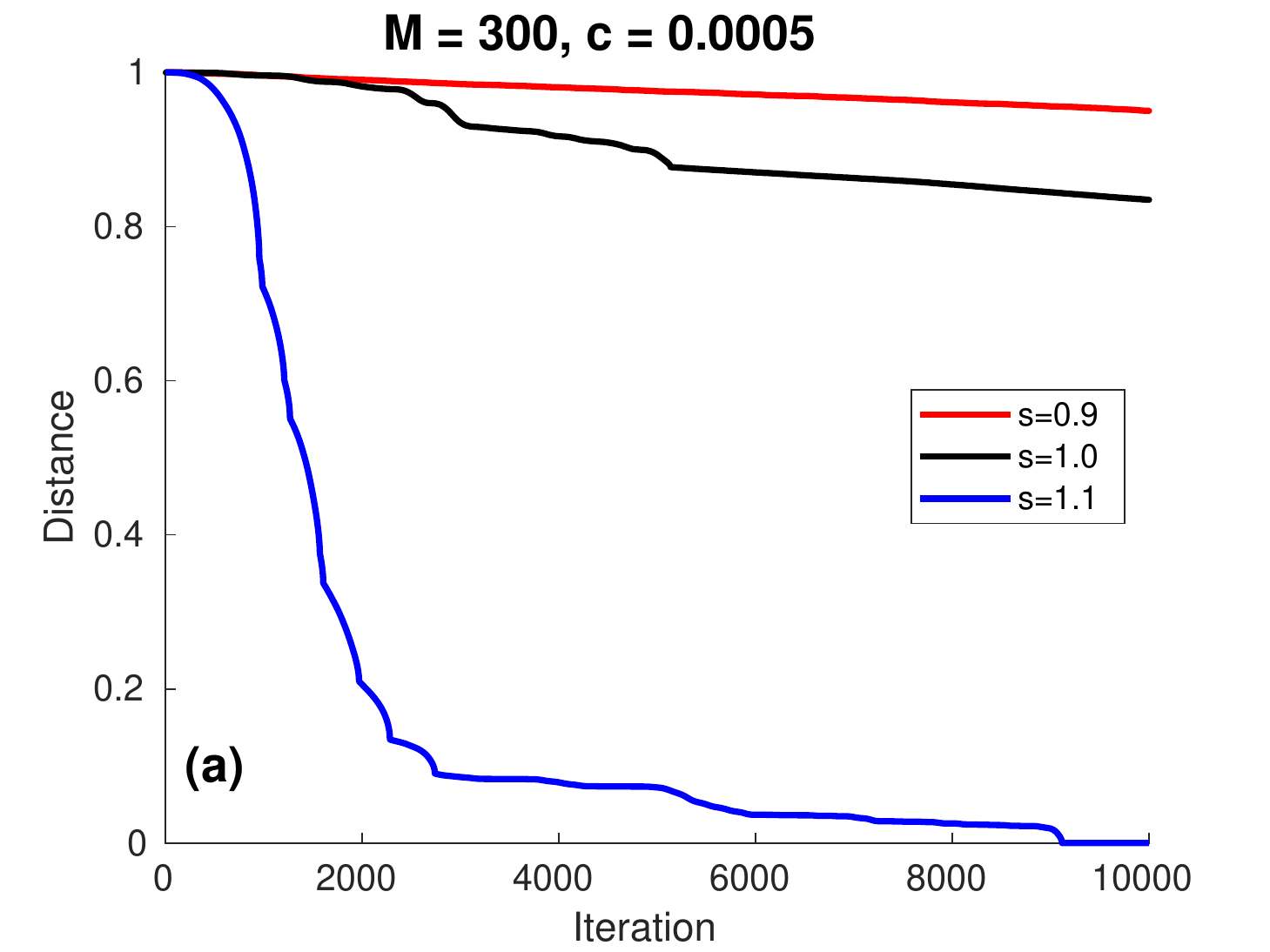}
\includegraphics[width=2.63in,height=1.68in]{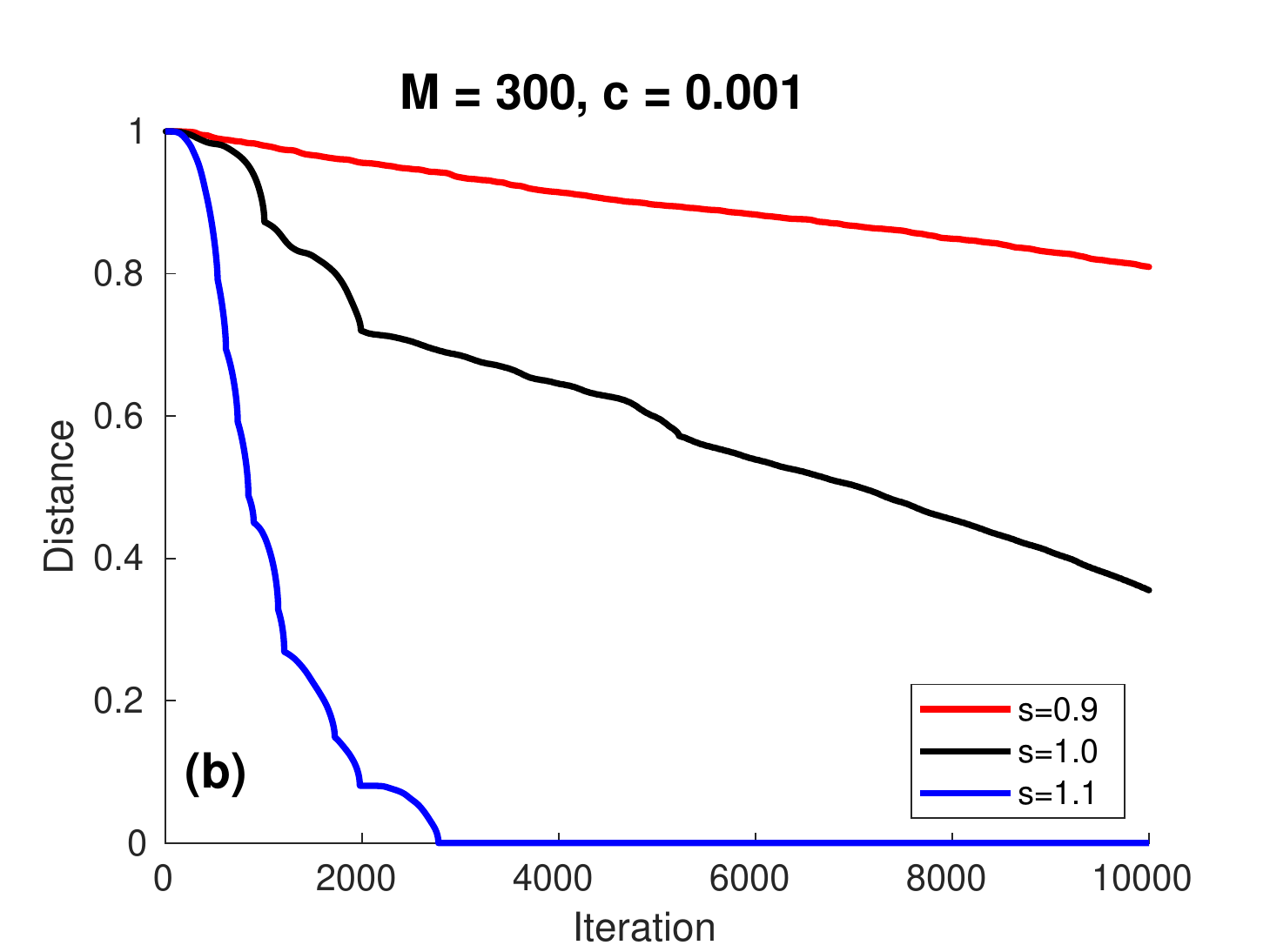}
\vspace{5mm}
\includegraphics[width=2.63in,height=1.68in]{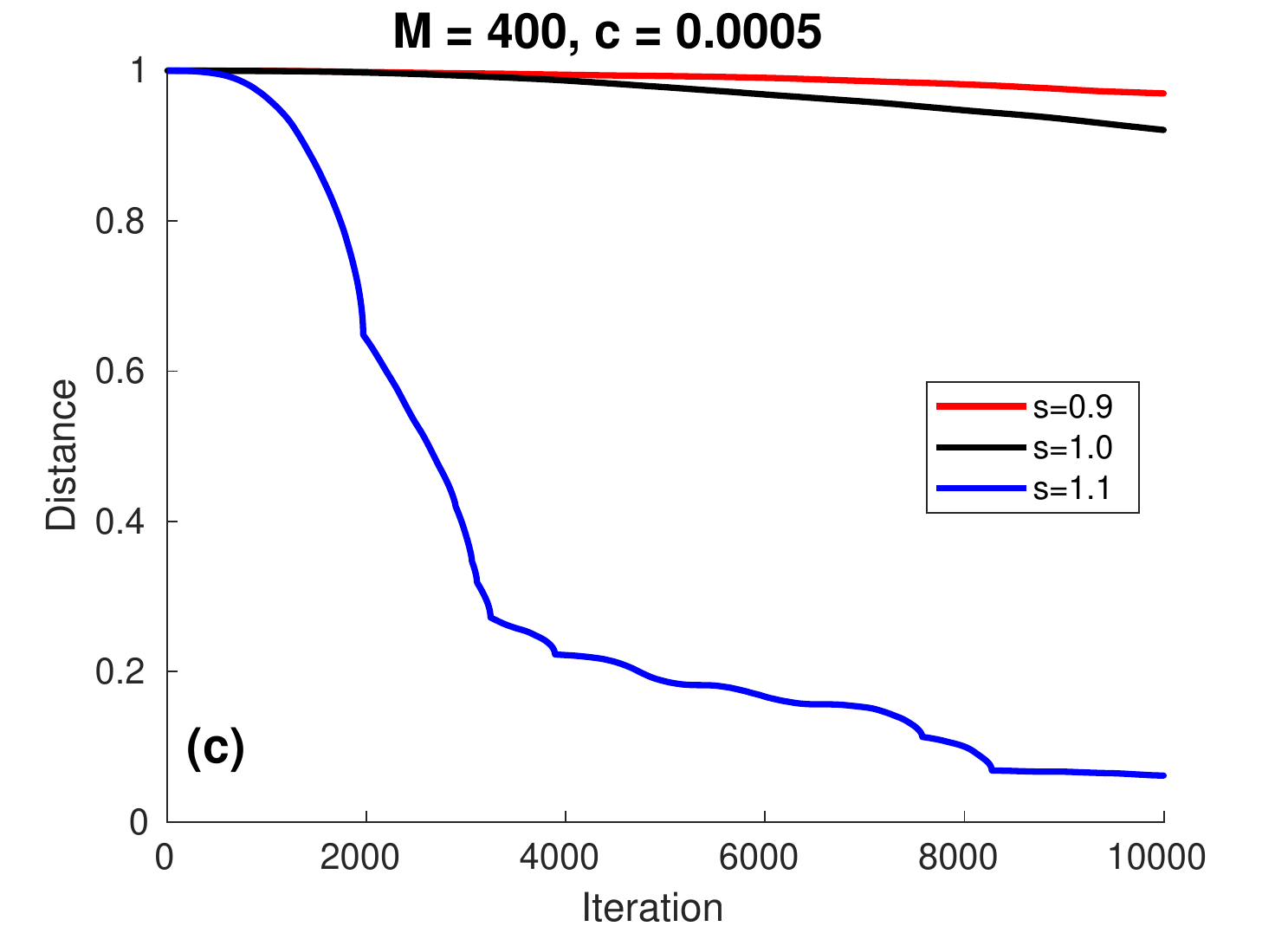}
\includegraphics[width=2.63in,height=1.68in]{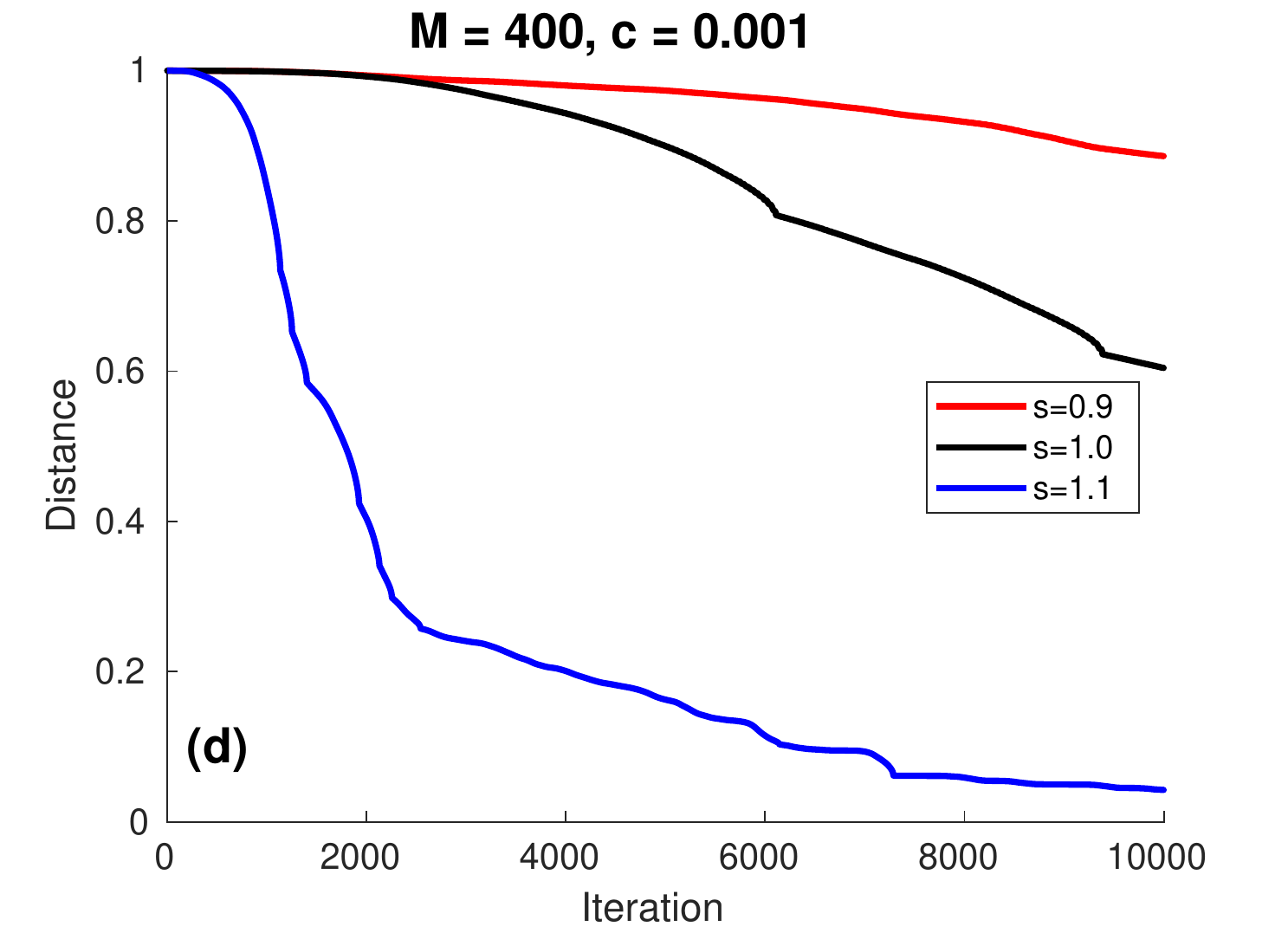}
\caption{A comparison of the effects that different choices of the vector $v$ have on the computation of the distance values $D_{s,\epsilon}^n.$ [LEFT] We consider the plots for (a) $M=300$ and $c=0.0005,$ (b) $M=300$ and $c=0.001,$ (c) $M=400$ and $c=0.0005,$ and (d) $M=400$ and $c=0.001.$ While there are slight difference in the results, the plots are {\em qualitatively} the same.}
\end{figure}

\subsection{Orthogonality Check}



It is well-known that the forward Gram-Schmidt procedure loses orthogonality in many calculations, which can cast doubt on our proposed distance calculations. In order to demonstrate the accuracy of our algorithm, we stored the entire Krylov subspace generated on a smaller problem instance ($n = 150$) and stored these as columns of a matrix $K.$ Then, we computed the value $Q = \|K^TK - I\|_\infty,$ which deviates from zero in proportion to the loss of orthogonality. The values of $Q$ resulting from various $c-$values employed in our algorithm are presented in Table 1, for two different choices of the truncation parameter, $M,$ and $s-$values representing superdiffusion, classical diffusion, and subdiffusion.

Two rows of results are shown for each $s-$value. The first row demonstrates the $Q-$values for the parameter regimes employed in most of our computational experiments, while the second demonstrates the orthogonality preservation for larger disorders. The results in both rows show that the forward Gram-Schmidt vectors are nearly orthogonal. The first row for each $s-$value shows very little change within the $c-$values considered, while the orthogonality seems to decrease as $c$ grows. 



\begin{table}[H]
\centering
\begin{tabular}{|c|||c||c|llllll|}
\hline\hline
$M$ & $s$ & \multicolumn{6}{c}{} & \\
\hline
\multirow{12}{*}{$100$}
& \multirow{4}{*}{$0.9$} 
& c & 0.0 & 0.0001 & 0.0005 & 0.001 & 0.005 & 0.01 \\
& & Q & 5.94e-13 & 7.26e-13 & 1.09e-12 & 1.27e-12 & 1.49e-12 & 1.16e-12 \\
& & c & 0.5 & 1.0 & 1.5 & 2.0 & 2.5 & 3.0 \\
& & Q & 7.27e-13 & 9.54e-13 & 2.28e-11 & 2.89e-10 & 1.04e-10 & 2.37e-08 \\
\cline{2-9}
& \multirow{4}{*}{$1$}
& c & 0.0 & 0.0001 & 0.0005 & 0.001 & 0.005 & 0.01 \\
& & Q & 4.53e-14 & 8.06e-13 & 7.14e-13 & 7.96e-13 & 5.75e-13 & 6.33e-13 \\
& & c & 0.5 & 1.0 & 1.5 & 2.0 & 2.5 & 3.0 \\
& & Q & 1.57e-12 & 4.22e-11 & 7.78e-09 & 5.69e-09 & 1.61e-09 & 5.27e-08 \\
\cline{2-9}
& \multirow{4}{*}{$1.1$}
& c & 0.0 & 0.0001 & 0.0005 & 0.001 & 0.005 & 0.01 \\
& & Q & 7.26e-13 & 9.09e-13 & 2.03e-12 & 1.93e-12 & 7.77e-13 & 7.48e-13 \\
& & c & 0.5 & 1.0 & 1.5 & 2.0 & 2.5 & 3.0 \\
& & Q & 1.44e-13 & 8.29e-12 & 4.18e-10 & 1.72e-10 & 5.16e-08 & 1.50e-08 \\
\hline\hline
\multirow{12}{*}{$500$}
& \multirow{4}{*}{$0.9$} 
& c & 0.0 & 0.0001 & 0.0005 & 0.001 & 0.005 & 0.01 \\
& & Q & 5.94e-13 & 6.39e-13 & 6.73e-13 & 3.66e-13 & 8.65e-13 & 7.59e-13 \\
& & c & 0.5 & 1.0 & 1.5 & 2.0 & 2.5 & 3.0 \\
& & Q & 3.86e-13 & 4.60e-12 & 4.06e-12 & 3.73e-12 & 5.31e-11 & 1.21e-10 \\
\cline{2-9}
& \multirow{4}{*}{$1$}
& c & 0.0 & 0.0001 & 0.0005 & 0.001 & 0.005 & 0.01 \\
& & Q & 4.53e-14 & 8.06e-13 & 7.14e-13 & 7.96e-13 & 5.75e-13 & 6.33e-13 \\
& & c & 0.5 & 1.0 & 1.5 & 2.0 & 2.5 & 3.0 \\
& & Q & 1.57e-12 & 4.22e-11 & 7.78e-09 & 5.69e-09 & 1.61e-09 & 5.27e-08 \\
\cline{2-9}
& \multirow{4}{*}{$1.1$}
& c & 0.0 & 0.0001 & 0.0005 & 0.001 & 0.005 & 0.01\\
& & Q & 1.85e-12 & 5.28e-12 & 5.34e-12 & 4.07e-12 & 3.19e-12 & 7.90e-12 \\
& & c & 0.5 & 1.0 & 1.5 & 2.0 & 2.5 & 3.0 \\
& & Q & 3.61e-12 & 1.86e-10 & 2.24e-10 & 3.76e-09 & 9.38e-10 & 7.67e-08 \\
\hline\hline
\end{tabular}

\caption{Presentation of the loss of orthogonality due to the forward Gram-Schmidt procedure via the $\infty-$norm of the matrix of orthogonalized vectors, when $n=150.$ We present multiple values of each of the parameters employed in the computational algorithm.}
\end{table}

\section{Conclusions and Future Work}



In this work, we presented novel results regarding the properties of the discrete fractional Laplacian, $(-\Delta)^s,\ s\in(0,2),$ and explored the transport behavior of a newly defined fractional Sch{\"o}dinger operator using a spectral technique. We introduced known results for $(-\Delta)^s,\ s\in(0,1],$ and developed the analogous results for the subdiffusive operator, $s\in(1,2).$ In particular, we derived an explicit representation for the action of the subdiffusive discrete fractional Laplacian and provided convergence results with respect to the parameter $s.$ Physical interpretations of this nonlocal operator were provided via probabilistic methods, and by relating the parameter $s$ to the asymptotic mean squared displacement in the anomalous diffusion regime. These initial discussions provide the theoretical framework to justify the applicability of the proposed operator to both quantum mechanical and classical transport problems in the physical world. 

The primary contribution of this work is proposing a numerical approach combining the existing spectral method (introduced in \cite{Liaw2013}) with our newly defined discrete fractional Schrödinger operator given in \R{schro1}. Preliminary results from our simulations demonstrate a clear qualitative difference in the transport behavior of a one-dimensional disordered system as the fraction  is varied. In particular, for $s>1$ ($s<1$) the corresponding transport behavior is more (less) localized than the classical case given by $s=1.$ 
For the classical case, $s=1,$ it is well known that all transport under \R{schro1} is localized for any nonzero disorder. Using this result as a baseline, our presented results suggest that non-localized transport can exist for systems modeled by superdiffusion. The numerical and physical conditions needed for such a phenomenon to occur will be characterized in our next article. The forthcoming work will provide a detailed study on the transport behavior of the one-dimensional system as a function of the various simulation and operator parameters, which determine the disorder concentration and nature of the nonlocal interactions. The observed results will then be used to identify the applicability of the proposed numerical approach to the study of transport phenomena in various physical systems. Limitations of the calculation will also be discussed. 

Our other future work will pursue several different paths, with the ultimate goal of further developing a powerful computational tool for the study of transport behavior in complex disordered media. To extend the findings of the present work, we will provide a detailed discussion on physical interpretation and proper scaling of the proposed technique. Such considerations will allow for the investigation of open questions regarding the parameters needed for de-localized energy states to exist in complex media, such as strongly coupled systems, multi-component flows, and plasmas. In these systems, the existence of extended states in the anomalous diffusion regime will be discussed as a mechanism for the onset of various dynamical phenomena, including turbulence, streaming instabilities, and global vorticity. Finally, we plan to extend this work to higher-dimensional settings. This direction requires both theoretical and computational endeavors that can greatly impact the scientific community.

\section*{Acknowledgments}

\noindent This work was supported by the NSF-DMS grant number 1802682 (CDL), NASA grant number 1571701 (TWH and LSM), NSF grant numbers 1707215 (LSM and TWH), 1740203 (TWH and LSM), and 1903450 (EGK, JLP, CDL, and LSM).

\section*{References}


\bibliographystyle{unsrt}
\bibliography{Anderson_Bib}

\end{document}